\newcolumntype{C}[1]{>{\centering\let\newline\\\arraybackslash\hspace{0pt}}m{#1}}
\newcolumntype{L}[1]{>{\raggedright\let\newline\\\arraybackslash\hspace{0pt}}m{#1}}
\newcolumntype{R}[1]{>{\raggedleft\let\newline\\\arraybackslash\hspace{0pt}}m{#1}}
\newcommand{\defeq}{\stackrel{def}{=}}
\newcommand{\bm}[1]{\boldsymbol{#1}}
\newcommand{\mb}[1]{\mathbf{#1}}
\newcommand{\tr}{\mathrm{tr}}
\newcommand{\var}{\mathrm{Var}}
\newcommand{\MSE}{\mathrm{MSE}}
\newcommand{\AS}{\mathrm{AS}}
\newcommand{\SNR}{{\alpha}}
\newcommand{\NMSE}{\mathrm{NMSE}}
\newcommand{\Diag}{\mathrm{Diag}}
\newtheorem{lemma}{Lemma}
\newtheorem{remark}{Remark}
\newtheorem{corollary}{Corollary}
\newtheorem{definition}{Definition}
\begin{document}

\title{Reduced-Rank Channel Estimation for Large-Scale MIMO Systems\thanks{This work
was supported in part by Taiwan's Ministry of Science and Technology under Grant NSC 102-2221-E-009-016-MY3 and
by MediaTek under Grant 101C125. The material in this paper was presented in part at the 2013 IEEE Globecom Workshops.}
\author{Ko-Feng Chen, Yen-Cheng Liu, and Yu T. Su$^\dag$\thanks{$\dag$Correspondence addressee.}\\}
\thanks{K.-F. Chen is with MediaTek Inc., Hsinchu, Taiwan (email: ko-feng.chen@mediatek.com).
Y.-C. Liu and Y. T. Su are with the Institute of Communications Engineering, National
Chiao Tung University, Hsinchu, Taiwan (email: ycliu@ieee.org; ytsu@nctu.edu.tw). }}

\DeclareGraphicsExtensions{.pdf}
\graphicspath{{pdfFigs/}}
\maketitle \thispagestyle{empty}

\begin{abstract}
We present two reduced-rank channel estimators for large-scale multiple-input,
multiple-output (MIMO) systems and analyze their mean square error (MSE) performance.
Taking advantage of the channel's transform domain sparseness, the estimators
yield outstanding performance and may offer additional mean angle-of-arrival
(AoA) information. It is shown that, for the estimators to be effective, one has
to select a proper predetermined unitary basis (transform) and be able to determine
the dominant channel rank and the associated subspace. Further MSE analysis reveals
the relations among the array size, channel rank, signal-to-noise ratio (SNR),
and the estimators' performance. It provides rationales for the proposed rank
determination and mean AoA estimation algorithms as well.

An angle alignment operation included in one of our channel models is proved to be
effective in further reducing the required rank, shifting the dominant basis vectors'
range (channel spectrum) and improving the estimator's performance when a suitable basis
is used. We also draw insightful analogies among MIMO channel modeling, transform coding,
parallel spatial search, and receive beamforming. Computer experiment results are provided
to examine the numerical effects of various estimator parameters and the advantages
of the proposed channel estimators and rank determination method.
\end{abstract}

\begin{IEEEkeywords}
   Channel estimation, channel rank, channel spatial correlation, massive MIMO, transform-domain approach.
\end{IEEEkeywords}


\section{Introduction}
%

We consider a cellular mobile network in which each base station (BS) is equipped
with a large-scale antenna array whose size $M$ is much greater than the number of
single-antenna mobile users it serves. Such a large-scale (distributed) multiple-input,
multiple-output (MIMO) system or a massive MIMO system for short has the potentiality
of achieving transmission rate much higher than those offered by current cellular
systems with enhanced reliability and drastically improved power efficiency. It takes
advantage of the so-called channel-hardening effect \cite{Scaling} which implies that
the channel vectors seen by different users tend to be mutually orthogonal and
frequency-independent \cite{Measurement2}. As a result, linear receivers are near-optimal
in the uplink and simple multiuser (MU) precoders are sufficient to guarantee satisfactory
downlink performance. We consider a single cell within such a network whose BS array
size $M \gg K$, $K$ being the number of single antenna mobile users within its coverage
range, and refer to this system as a distributed massive MIMO system.



To estimate a massive MIMO channel, we can employ the conventional pilot-assisted
channel estimators (CEs) such as least-squares (LS) \cite{Opt_pilot} or minimum mean
square error (MMSE) estimators \cite{MMSE_CE}. Taking advantage of the massive MIMO
channels' spatial sparseness, CEs based on compressive sensing (CS) have been proposed
\cite{CE_liter1}. However, the complexity of CS recovery is high and CS-based methods
often call for the use of randomized pilots to produce a sensing matrix which satisfies
the restricted isometry property (RIP). They also assume that the channel has limited
paths with separable angles of arrivals (AoAs). In fact, there are not many articles
consider MIMO channels with a cluster of stochastic AoAs around a mean angle. To
circumvent such a limitation, a complicated Bayesian learning approach with a Gaussian-mixture
channel model, was developed in \cite{CE_liter3}. Assuming the channel can be represented
by a nonnegative combination of limited number of elements from the so-called atomic set,
\cite{CE_liter5} presented a denoising-based CE.

By exploiting the transform domain sparseness, \cite{Model_based,GCommwork,CE_liter2}
presented transform domain CEs that use only significant components in the transform
domain to reduce the noise effect and computing complexity. While \cite{Model_based} and
\cite{GCommwork} explored various transforms, \cite{CE_liter2} and the references
therein focused on the discrete Fourier transform (DFT) domain.

In this paper, we extend and analyze the performance of the transform-domain approach
that exploits the spatial correlation-induced channel rank reduction. We show that with
a judicial choice of the unitary transform and a proper selection of the dominant
subspace, the resulting CE gives excellent performance as it involves only the critical
channel dimensions which is much smaller than that required by the conventional methods
\cite{Opt_pilot}. In other words, the efficiency of a transform-domain CE depends not
only on the unitary basis (transform) used but on how one selects a proper subset of the
basis vectors which span the dominant channel subspace. Furthermore, we verify that
improved performance is attainable if an additional angle alignment operation is in place
so that the receive array points to the mean AoA.

We discuss the effect of the basis (transform) used has on the CE's MSE performance and
explain the near-optimality of the type-2 discrete cosine transform (DCT-2) basis. Our
analysis enables us to find the relations between some channel parameters such as channel
rank, array size, signal-to-noise ratio (SNR), and the MSE performance for a given basis.
It also leads to a robust rank determination algorithm. The proposed CE, as will be seen,
can provide additional information about the mean AoA as well. This information is very
useful in designing a downlink precoder. Due to space limitation, we focus on the scenario
where each mobile station (MS) served use a single omnidirectional antenna and there is
only one dominant uplink arrival cluster for every MS-BS link.

The rest of this paper is organized as follows. In Section \ref{section:model}, a brief
review on the system and channel models is given and the advantage of using one of the
models is explained. Section \ref{sec:SSFC} presents the proposed CE and its MSE analysis,
discusses possible implementation complexity reduction, and provides a different perspective
on our approach. In Section \ref{sec:basis} we study the basis selection issue from the
perspective of transform coding and analyze various system and channel parameters' impacts
on the estimator's MSE performance in Section \ref{sec:RR_MO}. We present a rank and the
corresponding subspace indices determination algorithm in Section \ref{subsec:MOD}.
Simulation results are provided in Section \ref{section:simulation} to validate the
superiority of our CE and rank determination algorithm. We summarize our main contributions
in Section \ref{section:conclusion}.

\textit{Notation:}
$\mathrm{vec}(\cdot)$ represents the operator that stacks columns of the enclosed matrix,
whereas $\mathbb{E}\{\cdot\}$, $\|\cdot\|$, $\|\cdot\|_{2}$, and $\|\cdot\|_{F}$ denote
respectively the expectation, vector $\ell_2$-norm, matrix spectral norm, and Frobenius
norm of the enclosed item. We denote by $\mb{I}_n$, $\mb{1}_n$, and $\mb{0}_n$, the $n
\times n$ identity matrix and $n$-dimensional all-one and all-zero column vectors.
$\Diag(\cdot)$ translates the enclosed vector or elements into a diagonal matrix with
the diagonal terms being the enclosed items.

\section{System Model}
\label{section:model} Throughout this paper, we consider a single-cell massive MU-MIMO
system having an $M$-antenna BS and $K$ single-antenna MSs, where $M\gg K$.
\subsection{Operation Scenario and Assumptions}
We assume a narrowband (NB) communication environment in which a transmitted
signal suffers from both large- and small-scale fading. We denote $K$ MS-BS
link ranges by $d_k$ and assume that each uplink packet place its pilot of
length $T$ at the same time location. Given the initial frequency and time
synchronization has been achieved, we express the corresponding $M \times T$
noise variance normalized received samples at the BS, arranged in matrix and
frame-synchronized form, $\mb{Y}=[y_{ij}]$ as
\begin{IEEEeqnarray}{rCl}
 \mb{Y}=\sum_{k=1}^K\sqrt{\beta_k}\mb{h}_k\mb{p}_k^H+\mb{N}
    =\mb{H}\mb{D}_{\boldsymbol{\beta}}^{\frac{1}{2}}\mb{P}+\mb{N},
\label{eq:sys_mod}
\end{IEEEeqnarray}
where $\mb{H} = [\mb{h}_1, \cdots, \mb{h}_K]\in \mathbb{C}^{M\times K}$ and
$\mb{D}_{\boldsymbol{\beta}}=\text{Diag}({\boldsymbol{\beta}})$ contain respectively
the small-scale fading coefficient (SSFC) vectors $\{\mb{h}_k\}_{k=1}^K$ and
large-scale fading coefficient (LSFC) vector $\bm{\beta}=[\beta_1, \cdots,
\beta_K]^T$ that characterize the $K$ uplink vector channels and $\mb{N}=[n_{ij}]$
is the white Gaussian noise matrix with independent identically distributed (i.i.d.)
elements, $n_{ij}\sim \mathcal{CN}(0,1)$. Each element of the LSFC vector, $\beta_k
= {s_k}d_k^{-a}$, is the product of random variable $s_k$ representing the
shadowing effect and the pathloss $d_k^{-a}$, $a > 2$. Assume that $s_k$'s
are i.i.d. lognormal random variables, i.e., $10\log_{10}(s_k)\sim \mathcal{N}(0,
\sigma_s^2)$. The $K\times T$ matrix $\mb{P}=\left[\mb{p}_1,\cdots,\mb{p}_K \right]^H$,
where $T\geq K$, consists of orthogonal uplink pilot vectors $\{\mb{p}_k\}_{k=1}^K$.
The optimality of using orthogonal pilots was verified in \cite{Opt_pilot}.

A normalized NB uplink SSFC vector can be expressed as
\begin{equation}
 \mb{h}=\frac{1}{\sqrt{LS}}\sum_{\ell=1}^L\sum_{s=1}^{S}
           g_{\ell s}\mb{a}(\phi_{\ell s}),
\label{eq:chnl_mod}
\end{equation}
where $g_{\ell s}\sim\mathcal{CN}(0,1)$, $L$ denotes the number of incoming clusters,
$S$ the number of subpaths in a cluster, $\mb{a}(\phi_{\ell s})$ the steering vector
associated with AoA $\phi_{\ell s}$, and $-\Delta_\ell\leq\phi_{\ell s}-{\phi}_{\ell}
\leq\Delta_\ell$ with ${\phi}_{\ell}$ being the mean AoA of the $\ell$th cluster and
$\Delta_\ell$ the corresponding AS. We consider only the single-cluster case, i.e.,
$L=1$, the extension to multi-cluster channels is touched upon in Section
\ref{section:simulation} where an example is given to validate the extendibility of
the proposed approach.

It is reasonable to assume that the mobile users are relatively far apart (with respect
to the signal wavelength) so that the $k$th uplink SSFC vector is independent of the
$\ell$th vector, $\forall\ell\neq k$. With mutually independent $\tilde{\mb{h}}_k$'s,
each of them can be represented by
\begin{IEEEeqnarray}{rCl}
\mb{h}_k=\mb{\Phi}_{k}^{\frac{1}{2}}\tilde{\mb{h}}_k,
\end{IEEEeqnarray}
where $\mb{\Phi}_{k}$ is the spatial correlation matrix at the BS side with
respect to the $k$th user and $\tilde{\mb{h}}_k \sim \mathcal{CN}(\mb{0}_M,\mb{I}_M)$.
We assume that 
the SSFC matrix $\mb{H}$
remains constant during a pilot sequence period, i.e., the channel's coherence
time is much greater than $T$, while the LSFC $\bm{\beta}$ varies even slower.
Due to the use of orthogonal pilots and the fact that the channel estimation
procedure is identical for all users, we shall omit the user index $k$ henceforth.

\subsection{Analytic Model}
In \cite{Model_based}, two analytic correlated MIMO channel models were proposed.
These models generalize and encompass as special cases, among others, the
Kronecker \cite{Kron_Model}, virtual representation \cite{Sayeed}, and
Weichselberger \cite{Ozcelik} models. They often admit flexible reduced-rank (RR)
representations. For a single-input,
multiple-output (SIMO) channel the analytic MIMO models of \cite{Model_based}
degenerates to the following obvious result.
\begin{lemma}[RR representations]
\label{lemma:reducerank}
The channel vector seen by the $k$th user can be approximated by
\begin{equation}
\mb{h}\approx \mb{Q}_m\mb{c}',~ m\leq M,
\label{C_only}
\end{equation}
or alternately by the phase-modulated (angle-aligned) model
\begin{equation}
\mb{h}\approx \mb{W}(\phi)\mb{Q}_m\mb{c}
\label{RRApp}
\end{equation}
where $\mb{Q}_m =[\mb{q}_1, \mb{q}_2, \cdots, \mb{q}_m] \in \mathbb{C}^{M\times m}$ is a
predetermined (unitary) basis matrix, $\mb{c}'=[c'_1,c'_2,\cdots,c'_m]^T,\mb{c}=[c_1,c_2,
\cdots,c_m]^T \in \mathbb{C}^{m\times 1}$ are the transformed channel vectors for the $k$th
MS-BS link, and $\mb{W}(\phi)$ is diagonal with unit magnitude entries whose phases depend
on a single phase parameter $\phi$. The approximations become exact when $m=M$.
\end{lemma}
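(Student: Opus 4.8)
The plan is to establish both representations by exhibiting exact expansions at $m=M$ and then arguing that truncation to $m<M$ yields a controlled approximation. First I would treat the non-aligned model \eqref{C_only}. Since $\mb{Q}_M=[\mb{q}_1,\cdots,\mb{q}_M]$ is a full unitary basis of $\mathbb{C}^M$, every channel realization admits the exact expansion $\mb{h}=\mb{Q}_M\mb{c}'$ with $\mb{c}'=\mb{Q}_M^H\mb{h}$; this is just the change of coordinates into the chosen transform domain and requires only unitarity, so the $m=M$ case is immediate. The content of the lemma is therefore the claim that a partial basis $\mb{Q}_m$ suffices, i.e.\ that $\mb{h}$ concentrates its energy in an $m$-dimensional subspace. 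I would make this precise by forming the covariance $\mb{\Phi}=\mathbb{E}\{\mb{h}\mb{h}^H\}$ from the model \eqref{eq:chnl_mod} and observing that, because the AoAs cluster within $[\phi-\Delta,\phi+\Delta]$ around a single mean angle, the steering vectors $\mb{a}(\phi_{\ell s})$ occupy an effectively low-dimensional cone; the transform-domain coefficients $c'_i=\mb{q}_i^H\mb{h}$ then have rapidly decaying variances $\mb{q}_i^H\mb{\Phi}\mb{q}_i$ outside a band of width $\sim m$.

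For the angle-aligned model \eqref{RRApp}, the key observation is the shift structure of steering vectors: a ULA steering vector factors as $\mb{a}(\phi_{\ell s})=\mb{W}(\phi)\,\mb{a}(\phi_{\ell s}-\phi)$, where $\mb{W}(\phi)=\Diag\!\bigl(1,e^{-j\pi\sin\phi},\cdots\bigr)$ is the diagonal unit-magnitude matrix advertised in the statement. Substituting into \eqref{eq:chnl_mod} and pulling $\mb{W}(\phi)$ (which is independent of $s$) outside the double sum gives $\mb{h}=\mb{W}(\phi)\,\tilde{\mb{h}}$, where $\tilde{\mb{h}}$ is the channel of an equivalent array whose residual AoAs $\phi_{\ell s}-\phi$ are now centered at the origin. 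Applying the first part to $\tilde{\mb{h}}$ yields $\tilde{\mb{h}}=\mb{Q}_M\mb{c}$ exactly, hence $\mb{h}=\mb{W}(\phi)\mb{Q}_M\mb{c}$, and truncating to $\mb{Q}_m$ gives \eqref{RRApp}. The role of $\mb{W}(\phi)$ is thus to recenter the angular support so that the \emph{same} predetermined basis $\mb{Q}_m$ captures the dominant energy with a smaller $m$ — this is the rank-reduction benefit the abstract alludes to.

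The main obstacle is making the truncation step quantitative rather than heuristic: at finite $m<M$ the representation is genuinely an approximation, so I must control the tail energy $\sum_{i>m}|c'_i|^2$ (respectively $\sum_{i>m}|c_i|^2$). The cleanest route is to bound the expected residual $\mathbb{E}\{\|\mb{h}-\mb{Q}_m\mb{Q}_m^H\mb{h}\|^2\}=\sum_{i>m}\mb{q}_i^H\mb{\Phi}\mb{q}_i=\tr(\mb{\Phi})-\sum_{i\le m}\mb{q}_i^H\mb{\Phi}\mb{q}_i$ and to show this is small once $m$ exceeds the effective dimension of the angular cluster, with the decay rate governed by the angular spread $\Delta$ and the alignment of $\mb{Q}_m$ with $\mb{\Phi}$'s dominant eigenspace. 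Since the lemma only asserts $\mb{h}\approx\mb{Q}_m\mb{c}'$ and declares exactness at $m=M$, I expect the intended argument to rest on this covariance/eigenvalue-concentration picture, deferring the precise MSE quantification to the later analysis in Section~\ref{sec:RR_MO}; the shift-factorization identity for $\mb{W}(\phi)$ is the one genuinely new ingredient and should be stated carefully for the assumed array geometry.
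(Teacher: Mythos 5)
The lemma's only provable content---exactness at $m=M$---follows from unitarity alone, and your first paragraph handles it the same way the paper implicitly does: $\mb{h}=\mb{Q}_M\mb{Q}_M^H\mb{h}$, and likewise $\mb{h}=\mb{W}(\phi)\mb{Q}_M\mb{Q}_M^H\mb{W}^H(\phi)\mb{h}$ for \emph{every} $\phi$, since $\mb{W}(\phi)$ is diagonal with unit-modulus entries and hence unitary. The paper in fact offers no formal proof: the lemma is presented as the SIMO degenerate case of the models in \cite{Model_based}, the $m<M$ statement is a modeling approximation whose quality is deliberately deferred to the bias term of Lemma~\ref{thm:MSE}, to Remark~\ref{remark:LPM_merits}, and to Section~\ref{sec:basis}; your expected-residual expression $\mathbb{E}\{\|\mb{h}-\mb{Q}_m\mb{Q}_m^H\mb{h}\|^2\}=\sum_{i>m}\mb{q}_i^H\mb{\Phi}\mb{q}_i$ is exactly the quantity that reappears there as $\tr(\mb{D}_m\tilde{\mb{B}})$, so that part of your plan is consistent with the paper's structure.

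The genuine error is the ``key observation'' you use for the angle-aligned model: the claimed exact shift factorization $\mb{a}(\phi_{\ell s})=\mb{W}(\phi)\,\mb{a}(\phi_{\ell s}-\phi)$ is false for a ULA. Element-wise it asserts $\sin\phi_{\ell s}=\sin\phi+\sin(\phi_{\ell s}-\phi)$, whereas the addition formula gives $\sin\phi_{\ell s}=\sin\phi\cos(\phi_{\ell s}-\phi)+\cos\phi\sin(\phi_{\ell s}-\phi)$, so the identity fails except at degenerate angles; steering vectors are exponentials in $\sin\phi$, not in $\phi$, and do not form a multiplicative group under angle shifts. Consequently your intermediate claim that $\mb{h}=\mb{W}(\phi)\tilde{\mb{h}}$ \emph{exactly}, with $\tilde{\mb{h}}$ a channel whose AoAs are recentered at the origin, is wrong---phase modulation does not exactly de-rotate the cluster. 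The paper's Remark~\ref{remark:LPM_merits} (following \cite{SCM_v}) makes only the correct approximate statement: for small AS one uses $\cos\theta\approx1$ to get $[\mb{\Phi}]_{ij}\approx e^{-j2\pi d_{ij}\sin\phi/\lambda}[\bar{\mb{\Phi}}]_{ij}$, where the residual factor involves $\sin\theta\cos\phi$ (note the $\cos\phi$ compression your identity loses) and is approximately real when the AoA density is symmetric about $\phi$. The repair is easy because the lemma claims so little: drop the factorization, obtain exactness at $m=M$ from unitarity of $\mb{W}(\phi)$ as above, and invoke the small-AS correlation argument only as the heuristic motivation for why the aligned model (\ref{RRApp}) admits a smaller $m$ than (\ref{C_only}).
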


The model (\ref{C_only}) seems to be trivial and obvious, after all $\mb{c}'=\mb{Q}_m^H\mb{ h}$
is just a transform-coded version of $\mb{h}$. The not-so-obvious part is that it induces the
decomposition of the spatial correlation matrix
\begin{equation}
\mb{\Phi}=\mathbb{E}\{\mb{h}\mb{h}^H\}
=\mb{Q}_m\mathbb{E}\{\mb{c}'(\mb{c}')^H\}\mb{Q}_m^H
=\sum_k\sum_\ell \overline{c'}_{k\ell}\mb{q}_k\mb{q}_\ell^H,
\label{C_only_1}
\end{equation}
where $\overline{c'}_{k\ell}\defeq\mathbb{E}\{c'_k(c'_\ell)^*\}$.
Similarly, with $\overline{c}_{k\ell}\defeq\mathbb{E}\{c_k c_\ell^*\}$ and $\mb r\defeq\mb{W}^H(\phi)\mb{h}$,
the phase-modulated (PM) channel vector, (\ref{RRApp}) implies
\begin{IEEEeqnarray}{rCl}
\mb{W}^H(\phi)\mb{\Phi}\mb{W}(\phi)&=&\sum_k\sum_\ell \overline{c}_{k\ell}\mb{q}_k\mb{ q}_\ell^H
=\mathbb{E}\{\mb r\mb r^H\}.\label{eq:C_mtx}
\end{IEEEeqnarray}
Both (\ref{C_only_1}) and (\ref{eq:C_mtx}) provide an alternate and perhaps better conceptual
interpretation on our modeling, i.e., we attempt to approximate the spatial correlation matrix
by a two-dimensional (2D) separable transform of a lower-rank matrix. As (\ref{C_only_1}) can
be rewritten as $\mb{Q}_m^H\mb{\Phi}\mb{ Q}_m=\mathbb{E}\{\mb{c}'(\mb{c}')^H\}$, we are in fact
performing a 2D transform coding on $\mb{\Phi}$ to obtain a compressed representation.
The interpretation of (\ref{eq:C_mtx}), on the other hand,
is given in the following remark. Further elaboration from the transform coding perspective
is given in Section \ref{sec:basis} where we discuss the choice of the basis matrix $\mb{Q}_m$.
\begin{remark}[Advantage of PM model]
\label{remark:LPM_merits}
We claim that the rank-reduction capability of the PM model (\ref{RRApp}) can
potentially be better than that of (\ref{C_only}). We argue as following. Let $p(\theta)$ be
the pdf of the AoA and $\mb{u}_i=[u_{ix},u_{iy}]^T$ be the Cartesian coordinates of the $i$th
antenna element. Without loss of generality, we assume that antenna elements $i$ and $j$ lie
on the $y$-axis and the impinging waveform spreads over $[\phi-\Delta,\phi+\Delta]$, where
$\phi$ is the mean AoA and $\Delta$ the AS. Following \cite{SCM_v}, we can show that, for
antenna spacing $d_{ij}\defeq{u}_{iy}-{u}_{jy}$ and small $\Delta$,
\begin{IEEEeqnarray}{rCl}
  [\mb{\Phi}]_{ij}&=&\mathbb{E}\{h_i h_j^*\}=\int_{-\Delta}^{\Delta}p(\theta+\phi)
  e^{-j\frac{2\pi d_{ij}}{\lambda}\sin(\theta+\phi)}\mathrm{d}\theta\notag\\
   &\approx&
   e^{-j\frac{2\pi d_{ij}}{\lambda}\sin\phi}
   \int_{-\Delta}^{\Delta}p(\theta+\phi)e^{-j\frac{2\pi d_{ij}}{\lambda}\sin\theta\cos\phi}\mathrm{d}\theta\notag\\
   &\defeq&e^{-j\frac{2\pi d_{ij}}{\lambda}\sin\phi}\left[\bar{\mb{\Phi}}\right]_{ij}
\label{eq:Phi_ij}
\end{IEEEeqnarray}
where $\left[\bar{\mb{\Phi}}\right]_{ij}$ is real if $p(\theta)$ is symmetric about $\phi$.
This then implies that, for a uniform linear array (ULA) with antenna spacing $\xi$ and
incoming signal wavelength $\lambda$, if we set $[\mb{W}(\phi)]_{ii}=\exp\left[-j2\pi \frac{(i-1)\xi}
{\lambda} \sin\phi\right]$, ${\bf r}$ is linearly PM (LPM) and $\mb{C}\defeq\mathbb{E}\{\mb r\mb r^H\}$
is approximately real. Hence, its {\it real} dimension is effectively only {\it half} that of $\mb{\Phi}$. For other array
geometries, if the diagonal elements of $\mb{W}(\phi)$ form the steering vector associated with
the mean AoA, the same conclusion holds.
\end{remark}
A direct implication of the this remark is that the mean AoA of each user link is extractable if
the associated AS is not large which, as reported in a recent measurement campaign \cite{Measurement2,NewMeasurement},
is likely to be the case 
at the BS. The MSE of a channel estimator based
on (\ref{RRApp}), as to be proved later, is also smaller as a result of a smaller required modeling order.

\section{RR Channel Estimation and Basic Performance Analysis}
\label{sec:SSFC}
\subsection{SSFC Estimation}
\label{section:CE}
We begin with the channel model (\ref{RRApp}) and denote by $\bm{\epsilon }$ the modeling error.
Let ${\gamma}=\sqrt{{\beta}}\|\mb{p}\|^2$ and assume that an $M$-element ULA with antenna
spacing $\xi$ is used. Then, when LSFCs are perfectly known, the post-detection pilot-matched
filter output vector is given by
\begin{IEEEeqnarray}{rCl}
\label{MF output}
    \mb{Y}\mb{p}&=&
    {\sqrt{\beta}}
    \|\mb{p}\|^2\mb{h}+\mb{N}\mb{p}
    \notag\nonumber\\
    &\defeq&\gamma\left(\mb{W}(\phi)\mb{Q}_m\mb{c}+\bm{\epsilon}\right)
     +\mb{N}'.
\end{IEEEeqnarray}
If the modeling error $\bm\epsilon$ is negligible, the ML joint estimator for $\phi,\mb{c}$
is the solution of
\begin{IEEEeqnarray}{rCl}
\label{prob:28}
\min_{\phi,\mb{c}}&&~~\left\|\mb{Y}\mb{p}-\gamma
\mb{W}(\phi)\mb{Q}_{m}\mb{c}\right\|^2\nonumber\\
\mathrm{s.t.}&&~~\mb{W}(\phi)={\Diag}\left(\omega_1(\phi),
\cdots,\omega_M(\phi)\right), \nonumber\\
     &&~~~\omega_i(\phi)=\exp\left(-j2\pi\frac{(i-1)\xi}{\lambda}
     \sin\phi\right)
\end{IEEEeqnarray}
and is of a least-squares (LS) form similar to (17) of \cite{Model_based}. With $\mb{F}_m(\phi)
\defeq\mb{W}(\phi)\mb{Q}_{m}$ and $\mb{A}^\dagger\defeq (\mb{A}^H\mb{A})^{-1}\mb{A}^H$, the optimal
solution to (\ref{prob:28}) is
\begin{IEEEeqnarray}{rCl}
\hat{\phi}
&=&\underset{\phi\in[-\frac{\pi}{2},\frac{\pi}{2}]}{\arg\max}~
\mb{p}^H
\mb{Y}^H\mb{F}_m(\phi)
\mb{F}_m^\dagger(\phi)\mb{Y}\mb{p}
\nonumber \\
&=&\underset{\phi\in[-\frac{\pi}{2},\frac{\pi}{2}]}{\arg\max}~
\left\|\left(\mb{W}(\phi)\mb{Q}_{m}
\right)^H\mb{Y}\mb{p}\right\|^2,
\label{eqn:W_hati}\\
\hat{\mb{c}}
&=&\frac{1}{\gamma}\mb{F}_m^{\dagger}(\hat{\phi})
\mb{Y}\mb{p}
=\frac{1}{\gamma}\mb{Q}_m^H
\mb{W}^H(\hat{\phi})\mb{Y}\mb{p}.
\label{c_hat}
\end{IEEEeqnarray}
When the true LSFC is not available we use its estimate,
$\hat{\gamma}=\hat{\beta}^{\frac{1}{2}}\|\mb{p}\|^2$, and obtain
the SSFC estimate
\begin{equation}
 \hat{\mb{h}}_m=\mb{W}(\hat{\phi})\mb{Q}_{m}
 \hat{\mb{c}},
 \label{h_hat}
\end{equation}
where the subscript $m$ for $\hat{\mb{h}}$ emphasizes that the SSFC vector
estimate is obtained by using a rank-$m$ basis matrix ${\mb Q}_m$. Both (\ref{eqn:W_hati})
and (\ref{c_hat}) require no matrix inversion while $\hat{\phi}$
can be found by a line search. Because of $\mb{W}(\hat{\phi})$, we shall refer to
(\ref{h_hat}) as the LPM-aided (RR) channel estimator.
The one based on (\ref{C_only}), called the regular RR channel estimator, is simply
\begin{equation}
\hat{\mb{h}'}_m=\frac{1}{\gamma}\mb{Q}_{m}\mb{Q}_m^H\mb{Y}\mb{p}.
\label{h'_hat}
\end{equation}
\begin{remark}[Hybrid structure and complexity reduction]
Equation (\ref{c_hat}) can be further rewritten as $\frac{1}{\gamma}\tilde{\mb{Q}}_m^H\mb{Y}\mb{p}$,
where $\tilde{\mb{Q}}_m=\mb{W}(\hat{\phi})\mb{Q}_m$ is an $M \times m$ matrix. If $\mb{Y}$
contains time-domain samples and a unit-modulus basis is used, i.e., $\mb{Q}_m$ consists of
unit-modulus column vectors and so does $\tilde{\mb{Q}}_m$, then $\tilde{\mb{Q}}_m^H\mb{Y}$
can be realized with $m$ instead of $M$ RF chains and the remaining operations can be performed
digitally at the baseband, making possible a hybrid receive array structure. Similarly, (\ref{eqn:W_hati})
also allows such a hybrid implementation in searching for the mean AoA. The implementation
complexity is reduced significantly if $m \ll M$. When there is a limitation on the number
of RF chains, our RR channel estimator is particularly desirable. If $m$ exceeds the available
RF chain number, say $m_0$, a serial realization is possible if the channel's coherent time
is greater than $L=\lceil m/m_0 \rceil$ periods and an $L$-period pilot sequence is sent.
The receiver performs similar operations at each period with a basis matrix that consists of
non-overlapping columns of $\tilde{\mb{Q}}_m$.
\end{remark}

\begin{remark}[A parallel spatial search perspective]
As $\mb{Y}\mb{p}=\gamma \mb{h}$ in a noiseless environment, the operation $\tilde{\mb{Q}}_m^H\mb{Y}
\mb{p}= \tilde{\mb{Q}}_m^H\gamma \mb{h}$ gives us yet another interpretation of our approach:
spatial filtering the channel vector $\mb{h}$ by a filter bank, where each column of $\tilde{\mb{Q}}_m$
represents an $M$-element spatial filter (receive beamforming vector). We shall address the basis
(matrix) selection issue in Section \ref{sec:basis}. From the spatial search viewpoint, (\ref{eqn:W_hati})
tries to search for the mean AoA by finding which spatial filter bank has the maximum squared sum
output magnitude. This seems not to be a very plausible approach after all, the equivalent filter bank is
pointing at multiple directions simultaneously. Using the sum output without looking at individual
filter (beam) outputs is equivalent to determining the AoA by the sum spatial response which may
yield poor spatial resolution. However, reflecting on (\ref{eqn:W_hati}) and the description of (\ref{eq:chnl_mod}) should make it clear that, unlike some known methods which assume a single or multiple
separable, deterministic incoming plane wave, each with a fixed AoA \cite{CE_liter1,CE_liter2,CE_liter5},
we are searching for the {\it mean} AoA of a group of plane waves with random incident angles around the
mean AoA. Unless the AS is much smaller than the beamwidth of a single filter (beam), the channel vector
will induce responses (leakage) in multiple filters. The sum output is resulted from a cluster of steering
directions centered around a mean AoA with an AS determined by $m$, the modeling order.
\label{rem:sp_filter}
\end{remark}

\subsection{Performance Analysis}
Let $\mb{Q}\in\mathcal{U}(M)$, where $\mathcal{U}(M)$ is the
unitary group of degree $M$, i.e., the set of all $M\times M$
unitary matrices. Performing unitary transform $\mb{Q}$ on the LPM
channel vector $\mb r$, we obtain $\mb{u}=\mb{Q}^H\mb{r}=\mb{Q}^H\mb{W}^H(\phi)\mb{h}$
which is equal to $\mb{c}$ in (\ref{RRApp}) when $m=M$.
For a reason to become clear later we refer to the correlation matrix of $\mb{u}$
\begin{IEEEeqnarray}{rCl}
\mb{B}\defeq\mathbb{E}\left\{\mb{Q}^H\mb{r}\mb{r}^H\mb{Q}\right\}
=\mb{Q}^H\mb{C}\mb{Q}
\label{Cu}
\end{IEEEeqnarray}
as the \textit{bias matrix}, where $\mb{Q}=[\mb{q}_1, \mb{q}_2,\cdots, \mb{q}_M]$.
Equation (\ref{Cu}) indicates that the bias matrix is the (separable) 2D unitary transform of the
correlation matrix $\mb{C}$. Let $\sigma_\ell^2\defeq[\mb{B}]_{\ell\ell}$, $1 \leq \ell \leq M$,
be the diagonal entries of $\mb{B}$, i.e., the energy spectrum of the transformed channel vector.
Invoking the identity
\begin{IEEEeqnarray}{rCl}
[\mb{C}]_{\ell\ell}
=\omega_\ell^*(\phi)[\mb{\Phi}]_{\ell\ell}\:\omega_\ell(\phi)
=\left[\mb{\Phi}\right]_{\ell\ell}=1,\notag
\end{IEEEeqnarray}
we obtain
\begin{IEEEeqnarray}{rCl}
\sum_{\ell=1}^M\sigma_\ell^2&=&\tr(\mb{B}) 
=\tr(\mb{Q}^H\mb{C}\mb{Q})=\sum_{\ell=1}^M[\mb{C}]_{\ell\ell} = M,~~~
\end{IEEEeqnarray}
an alternate expression to re-confirm that a unitary transform is energy-preserving,
and $\tr(\mb{B})=\mathbb{E}\{\|\mb{u}\|^2\}=\mathbb{E}\{\|\mb{h} \|^2\}$.
We note that the 2D unitary transform in effect redistributes the energy (or variance) of
$\mb{h}$ (and $\mb{r}$). The LPM operator $\mb{W}^H(\phi)$, besides helping
extract the mean AoA information, reshapes the channel's energy spectrum.

Assuming perfect LSFC information and compensation, we substitute (\ref{c_hat})
into (\ref{h_hat}) to obtain
\begin{IEEEeqnarray}{rCl}
\hat{\mb{h}}_m&=&\frac{1}{\gamma}\mb{W}(\hat{\phi})
\mb{Q}_{m}\mb{Q}_m^H
\mb{W}^H(\hat{\phi})\mb{Y}\mb{p}\label{eq:RR_CE_Yp}\\
&=&\frac{1}{\gamma}\mb{W}(\hat{\phi})
\mb{Q}_{m}\mb{Q}_m^H
\mb{W}^H(\hat{\phi})(\gamma\mb{h}+\mb{N}\mb{p})~~~~\nonumber\\
&\defeq&\mathbb{E}\{\hat{\mb{h}}_m\}+\bm{\nu},
\label{eqn:h_biased}
\end{IEEEeqnarray}
where $\bm{\nu}=\frac{1}{\gamma}\mb{W}(\hat{\phi})
\mb{Q}_{m}\mb{Q}_m^H
\mb{W}^H(\hat{\phi})\mb{N}\mb{p}$.
Denoting by $\MSE(\cdot)$, the MSE of the enclosed channel estimator
we have the decomposition
\begin{IEEEeqnarray}{rCl}
\MSE(\hat{\mb{h}}_m)&=&
\mathbb{E}\left\{\left\|\hat{\mb{h}}_m-\mb{h} \right\|^2\right\}\notag\\
&=&\underbrace{\mathbb{E}\left\{
\left\|\hat{\mb{h}}_m-\mathbb{E}\{\hat{\mb{h}}\}
\right\|^2\right\}}_{\defeq\var\{{\hat{\mb{h}}}_m\}}+
\underbrace{\mathbb{E}\left\{
\left\|\mathbb{E}\{\hat{\mb{h}}_m\}-\mb{h}
\right\|^2\right\}}_{\defeq b({\hat{\mb{h}}}_m)}\vspace{-1.2em}\notag\\
\label{eqn:23}
\end{IEEEeqnarray}
where $\var\{{\hat{\mb{h}}_m}\}$ and $b({\hat{\mb{h}}}_m)$ represent respectively
the variance and bias of the estimator $\hat{\mb{h}}_m$. We prove in \ref{app:pf_MSE} that
\begin{lemma}
\label{thm:MSE}
For the channel estimator based on (\ref{RRApp}) $\hat{\mb{h}}_m$, with $\mb{Q}_m$
comprising $m$ distinct columns of $\mb{Q}$, we have
\begin{IEEEeqnarray}{rCl}
\var\{\hat{\mb{h}}_m\}&=&\frac{m}{\beta\|\mb{p}\|^2}
\label{eqn:var_term}
\end{IEEEeqnarray}
and
\begin{IEEEeqnarray}{rCl}
b(\hat{\mb{h}}_m)
&=&\tr\left(\mb{D}_m\hat{\mb{B}}\right)=\sum_{\ell \in \mathcal{F}^c_m}[\hat{\mb{B}}]_{\ell\ell},
\label{eqn:bias_term}
\end{IEEEeqnarray}
where $\mathcal{F}_{m}$ consists of the indices of $\mb Q$'s column vectors that are included
in $\mb Q_m$, $\mathcal{F}^c_m=\{1,2,\cdots, M\}\setminus\mathcal{F}_m$, $\hat{\mb{B}}$ is the
\textit{bias} matrix defined by (\ref{Cu}) with $\phi$ substituted by $\hat{\phi}$, and $\mb{D}_m
\in \mathbb{R}^{M\times M}$ is diagonal with
\[[\mb{D}_m]_{\ell\ell}=\left\{
                    \begin{array}{ll}
                      1, & \hbox{$\ell\in\mathcal{F}_m^c$;} \\
                      0, & \hbox{otherwise.}
                    \end{array}
                  \right.
\]
\end{lemma}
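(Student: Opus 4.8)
The plan is to rewrite the estimator in the compact projection form already suggested by (\ref{eqn:h_biased}) and then evaluate the two pieces of the decomposition (\ref{eqn:23}) separately, conditioning throughout on the estimated phase $\hat\phi$ so that the operator $\mb{W}(\hat\phi)$ is treated as a fixed unitary matrix. First I would set
\begin{IEEEeqnarray}{rCl}
\mb{P}_m & \defeq & \mb{W}(\hat\phi)\mb{Q}_m\mb{Q}_m^H\mb{W}^H(\hat\phi),\notag
\end{IEEEeqnarray}
and observe that, since $\mb{Q}_m$ consists of $m$ orthonormal columns of the unitary $\mb{Q}$ (so $\mb{Q}_m^H\mb{Q}_m=\mb{I}_m$) and $\mb{W}(\hat\phi)$ is unitary, $\mb{P}_m$ is a Hermitian, idempotent (orthogonal) projection of rank $m$. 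With this notation (\ref{eqn:h_biased}) reads $\hat{\mb{h}}_m=\mb{P}_m\mb{h}+\bm\nu$ with $\bm\nu=\frac{1}{\gamma}\mb{P}_m\mb{N}\mb{p}$, and the noise-conditional mean is $\mathbb{E}\{\hat{\mb{h}}_m\}=\mb{P}_m\mb{h}$.

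For the variance term I would use that, because the entries of $\mb{N}$ are i.i.d.\ $\mathcal{CN}(0,1)$, the vector $\mb{N}\mb{p}$ is $\mathcal{CN}(\mb{0}_M,\|\mb{p}\|^2\mb{I}_M)$. Then, exploiting $\mb{P}_m^H\mb{P}_m=\mb{P}_m$ and the cyclic property of the trace,
\begin{IEEEeqnarray}{rCl}
\var\{\hat{\mb{h}}_m\} & = & \mathbb{E}\{\|\bm\nu\|^2\}
=\tfrac{1}{\gamma^2}\,\|\mb{p}\|^2\,\tr(\mb{P}_m)
=\tfrac{m\,\|\mb{p}\|^2}{\gamma^2},\notag
\end{IEEEeqnarray}
where $\tr(\mb{P}_m)=\tr\big(\mb{Q}_m^H\mb{W}^H(\hat\phi)\mb{W}(\hat\phi)\mb{Q}_m\big)=\tr(\mb{I}_m)=m$. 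Substituting $\gamma=\sqrt{\beta}\|\mb{p}\|^2$ then yields (\ref{eqn:var_term}).

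For the bias term, since $\mb{I}_M-\mb{P}_m$ is also an orthogonal projection, I would write
\begin{IEEEeqnarray}{rCl}
b(\hat{\mb{h}}_m) & = & \mathbb{E}\{\|(\mb{I}_M-\mb{P}_m)\mb{h}\|^2\}
=\tr\big((\mb{I}_M-\mb{P}_m)\mb{\Phi}\big),\notag
\end{IEEEeqnarray}
using $\mathbb{E}\{\mb{h}\mb{h}^H\}=\mb{\Phi}$. The crux is rewriting $\tr(\mb{P}_m\mb{\Phi})$ through the bias matrix: invoking $\mb{W}^H(\hat\phi)\mb{\Phi}\mb{W}(\hat\phi)=\hat{\mb{C}}$ from (\ref{eq:C_mtx}) gives $\tr(\mb{P}_m\mb{\Phi})=\tr(\mb{Q}_m^H\hat{\mb{C}}\mb{Q}_m)$, and the definition $\hat{\mb{B}}=\mb{Q}^H\hat{\mb{C}}\mb{Q}$ from (\ref{Cu}) shows that, because $\mb{Q}_m$ is the column-submatrix of $\mb{Q}$ indexed by $\mathcal{F}_m$, the matrix $\mb{Q}_m^H\hat{\mb{C}}\mb{Q}_m$ is exactly the principal submatrix of $\hat{\mb{B}}$ on the rows and columns in $\mathcal{F}_m$; in particular $\tr(\mb{Q}_m^H\hat{\mb{C}}\mb{Q}_m)=\sum_{\ell\in\mathcal{F}_m}[\hat{\mb{B}}]_{\ell\ell}$. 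Finally, using the energy-preservation identity $\tr(\hat{\mb{B}})=\tr(\hat{\mb{C}})=\tr(\mb{\Phi})=M$ established in the text, I would conclude
\begin{IEEEeqnarray}{rCl}
b(\hat{\mb{h}}_m) & = & M-\sum_{\ell\in\mathcal{F}_m}[\hat{\mb{B}}]_{\ell\ell}
=\sum_{\ell\in\mathcal{F}_m^c}[\hat{\mb{B}}]_{\ell\ell}
=\tr(\mb{D}_m\hat{\mb{B}}),\notag
\end{IEEEeqnarray}
which is (\ref{eqn:bias_term}).

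The hard part will be the bookkeeping in the submatrix identification: one must verify that selecting $m$ columns of $\mb{Q}$ to form $\mb{Q}_m$ corresponds precisely to extracting the $\mathcal{F}_m$-indexed principal submatrix of $\hat{\mb{B}}$, and that the phase operator $\mb{W}(\hat\phi)$ cancels correctly through its unitarity in both the variance trace and the passage $\mb{W}^H(\hat\phi)\mb{\Phi}\mb{W}(\hat\phi)=\hat{\mb{C}}$. A secondary point to state carefully is the convention of conditioning on $\hat\phi$ when forming $\mathbb{E}\{\hat{\mb{h}}_m\}=\mb{P}_m\mb{h}$, which is the interpretation implicit in the definitions following (\ref{eqn:h_biased}); once this is fixed, every remaining step is a routine trace manipulation.
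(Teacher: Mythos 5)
Your proposal is correct and follows essentially the same path as the paper's Appendix~A: both treat $\mb{W}(\hat{\phi})\mb{Q}_m\mb{Q}_m^H\mb{W}^H(\hat{\phi})$ as a rank-$m$ orthogonal projection (conditioning on $\hat{\phi}$), obtain the variance as $\tr(\mb{P}_m)\|\mb{p}\|^2/\gamma^2$ via the noise covariance, and reduce the bias to diagonal entries of $\hat{\mb{B}}$ by cycling the trace through the unitary factors. The only cosmetic difference is at the end of the bias computation, where you pass through the complement $\tr(\hat{\mb{B}})-\sum_{\ell\in\mathcal{F}_m}[\hat{\mb{B}}]_{\ell\ell}$ using trace invariance, while the paper writes $\mb{Q}_m\mb{Q}_m^H=\mb{Q}(\mb{I}_M-\mb{D}_m)\mb{Q}^H$ and uses idempotency of $\mb{D}_m$ to land on $\tr(\mb{D}_m\hat{\mb{B}})$ directly; these are interchangeable trace bookkeeping steps, not different arguments.
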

Using an argument similar to \ref{app:pf_MSE}, we verify that
\begin{corollary}
\label{thm:MSE2}
For the regular SSFC estimator $\hat{\mb{h}}'_m$,
\begin{IEEEeqnarray}{rCl}
\var\{\hat{\mb{h}}'_m\}
&=&\frac{m}{\beta\|\mb{p}\|^2},
\label{eqn:var_term2}\\
b(\hat{\mb{h}}'_m)
&=&\tr\left(\mb{D}_m\tilde{\mb{B}}\right)
\label{eqn:bias_term2}
\end{IEEEeqnarray}
where $\tilde{\mb{B}}=\mb{Q}^H\mb{\Phi}\mb{Q}\in \mathbb{C}^{M\times M}$.
\end{corollary}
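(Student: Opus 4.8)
The plan is to mirror the derivation of Lemma~\ref{thm:MSE} carried out in \ref{app:pf_MSE}, exploiting the fact that the regular estimator $\hat{\mb{h}}'_m$ in (\ref{h'_hat}) contains no phase-modulation operator. Formally, one recovers $\hat{\mb{h}}'_m$ from $\hat{\mb{h}}_m$ by setting $\mb{W}(\hat\phi)=\mb{I}_M$, so the whole argument carries over with the data-dependent projector $\mb{W}(\hat\phi)\mb{Q}_m\mb{Q}_m^H\mb{W}^H(\hat\phi)$ replaced by the fixed orthogonal projector $\mb{Q}_m\mb{Q}_m^H$, and with the phase-compensated correlation matrix $\mb{C}=\mb{W}^H(\phi)\mb{\Phi}\mb{W}(\phi)$ replaced by $\mb{\Phi}$ itself. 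Consequently the bias matrix $\hat{\mb{B}}$ of Lemma~\ref{thm:MSE} specializes to $\tilde{\mb{B}}=\mb{Q}^H\mb{\Phi}\mb{Q}$, which is exactly the object appearing in (\ref{eqn:bias_term2}).

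First I would substitute $\mb{Y}\mb{p}=\gamma\mb{h}+\mb{N}\mb{p}$ into (\ref{h'_hat}) to split $\hat{\mb{h}}'_m=\mb{Q}_m\mb{Q}_m^H\mb{h}+\frac{1}{\gamma}\mb{Q}_m\mb{Q}_m^H\mb{N}\mb{p}$. Since $\mb{N}\mb{p}$ has zero conditional mean given $\mb{h}$, the conditional mean of the estimator is $\mb{Q}_m\mb{Q}_m^H\mb{h}$ and the fluctuation about this mean is precisely the noise term, which isolates the two pieces of (\ref{eqn:23}). For the variance I then write $\var\{\hat{\mb{h}}'_m\}=\frac{1}{\gamma^2}\mathbb{E}\{\|\mb{Q}_m\mb{Q}_m^H\mb{N}\mb{p}\|^2\}$; using $\mb{N}\mb{p}\sim\mathcal{CN}(\mb{0}_M,\|\mb{p}\|^2\mb{I}_M)$, the idempotence $\mb{Q}_m\mb{Q}_m^H\mb{Q}_m\mb{Q}_m^H=\mb{Q}_m\mb{Q}_m^H$, and $\tr(\mb{Q}_m\mb{Q}_m^H)=\tr(\mb{Q}_m^H\mb{Q}_m)=m$, this collapses to $\frac{1}{\gamma^2}\|\mb{p}\|^2 m=\frac{m}{\beta\|\mb{p}\|^2}$ after inserting $\gamma^2=\beta\|\mb{p}\|^4$, establishing (\ref{eqn:var_term2}).

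For the bias I would compute $b(\hat{\mb{h}}'_m)=\mathbb{E}\{\|(\mb{I}_M-\mb{Q}_m\mb{Q}_m^H)\mb{h}\|^2\}=\tr\big((\mb{I}_M-\mb{Q}_m\mb{Q}_m^H)\mb{\Phi}\big)$, using $\mathbb{E}\{\mb{h}\mb{h}^H\}=\mb{\Phi}$ together with the Hermiticity and idempotence of the projector. The key algebraic step is the identity $\mb{I}_M-\mb{Q}_m\mb{Q}_m^H=\sum_{\ell\in\mathcal{F}_m^c}\mb{q}_\ell\mb{q}_\ell^H=\mb{Q}\mb{D}_m\mb{Q}^H$, which follows from completeness $\mb{Q}\mb{Q}^H=\mb{I}_M$ and the definition of $\mb{D}_m$. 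The cyclic property of the trace then gives $\tr(\mb{Q}\mb{D}_m\mb{Q}^H\mb{\Phi})=\tr(\mb{D}_m\,\mb{Q}^H\mb{\Phi}\mb{Q})=\tr(\mb{D}_m\tilde{\mb{B}})$, which is (\ref{eqn:bias_term2}).

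I do not expect a genuine obstacle here; the statement is a corollary precisely because dropping $\mb{W}(\hat\phi)$ removes the only delicate feature of the Lemma~\ref{thm:MSE} proof, namely tracking how the estimated phase enters the bias matrix. The single point requiring care is the ordering of the two expectations: one averages over the noise first (conditionally on $\mb{h}$), so that the projected noise contributes the variance, and then over the random channel $\mb{h}$, so that its residual energy in the discarded subspace indexed by $\mathcal{F}_m^c$ contributes the bias $\tr(\mb{D}_m\tilde{\mb{B}})$.
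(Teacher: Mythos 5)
Your proposal is correct and matches the paper's own treatment: the paper proves this corollary exactly as you do, by noting it follows from "an argument similar to" the Appendix A proof of Lemma~\ref{thm:MSE}, i.e., by setting $\mb{W}(\hat\phi)=\mb{I}_M$ so that the projector becomes $\mb{Q}_m\mb{Q}_m^H$ and the bias matrix specializes to $\tilde{\mb{B}}=\mb{Q}^H\mb{\Phi}\mb{Q}$. Your variance computation via $\mb{N}\mb{p}\sim\mathcal{CN}(\mb{0}_M,\|\mb{p}\|^2\mb{I}_M)$ and your bias computation via $\mb{I}_M-\mb{Q}_m\mb{Q}_m^H=\mb{Q}\mb{D}_m\mb{Q}^H$ and the cyclic trace property are precisely the steps used in the appendix.
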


Note that $\beta \|\mb{p}\|^2$ is the post-detection SNR per antenna port (i.e., an element of
the vector (\ref{MF output})) and so $\beta \|{\bf p}\|^2/m$ can be regarded as the effective
pre-detection SNR per dimension. Equation (\ref{eqn:var_term}) thus says that the estimator's variance
is inverse proportional to this pre-detection sample SNR and is due entirely to noise $\mb{N}$
but independent of the basis (matrix) used. The estimator's bias, as shown in (\ref{eqn:bias_term})
or (\ref{eqn:bias_term2}), is equal to the sum of the $M-m$ diagonal terms, indexed by $\mathcal{F}^c_m$,
of the positive semidefinite matrix $\hat{\mb{B}}$ or $\tilde{\mb{B}}$ and is caused by the modeling
error $\bm{\epsilon}$. The bias can be minimized by choosing the $m$ columns that span the $m$-dimensional
subspace containing the maximum energy of $\mb{h}$. When the full-rank model, i.e., $m=M$, is used,
we have $\mb{Q}_{m}\mb{Q}_{m}^H=\mb{I}_M$ and $\mb{D}_m=\mb{0}_{M\times M}$. Both estimators become
unbiased and equivalent to the conventional LS estimator \cite{Opt_pilot}
\begin{IEEEeqnarray}{rCl}
\hat{\mb{h}}_M=\hat{\mb{h}}'_M=\frac{1}{\gamma}\mb{Y}\mb{p},
\label{eqn:convetional_LS}
\end{IEEEeqnarray}
which is independent of the basis used,
and give the same performance.

\section{Basis Selection for RR Channel Modeling}
\label{sec:basis}
The MSE performance for channel estimator $\hat{\mb{h}}_m$ based on
(\ref{RRApp}) is given by (\ref{eqn:23})--(\ref{eqn:bias_term}).
We note that 
\begin{IEEEeqnarray}{rCl}
b({\hat{\mb{h}}}_m)=\mathbb{E}\left\{\left\|\hat{\mb{u}}-\hat{\mb{u}}^{(m)}\right\|^2\right\}
\end{IEEEeqnarray}
where $\hat{\mb{u}}^{(m)}$ is the truncated version of $\hat{\mb{u}}=\mb{Q}^H \hat{\mb{r}}
\defeq\mb{Q}^H\mb{W}(\hat{\phi})\mb{h}$ obtained by nulling its $M-m$ elements associated with
$\mathcal{F}^c_m$. As $\var\{\hat{\mb{h}}_m\}$ is independent of the basis matrix, to find a basis
matrix that minimizes the mean squared modeling error for a given modeling order $m$ is equivalent
to find a bias-minimizing basis matrix $\mb{Q}_{\textrm{opt}}$ for which the transformed channel
vector would have the least average power in the truncated $M-m$ coordinates, i.e.,
\begin{IEEEeqnarray}{rCl}
\mb{Q}_{\textrm{opt}}=\arg \underset{\mb{Q}\in\mathcal{U}(M)}{\min}\mathbb{E}\left\{\left\|\mb{ Q}^H\hat{\mb{r}}
-\left(\mb{Q}^H\hat{\mb{r}}\right)^{(m)}\right\|^2\right\}.
\label{prob:minrank}
\end{IEEEeqnarray}
\subsection{A Transform Coding Perspective}
\label{subsec:KLT}
The above problem is related to the notion of \textit{energy compactness} of a unitary
transform in digital waveform coding \cite{KLT,KLT2,DCT}. A unitary transform $\mb{Q}$
on the vector $\mb{r}$ redistributes the vector's energy $\mathbb{E}\{\|\mb{r}\|^2\}$
as $\{\sigma_\ell^2\}_{\ell=1}^M$. 
The resulting energy compactness
is measured by the (subband) coding gain \cite[Ch. 1]{KLT}
\begin{IEEEeqnarray}{rCl}
\tau(\mb{Q})\defeq\frac{\frac{1}{M}\sum_{\ell=1}^M\sigma_\ell^2}
{\left(\prod_{\ell=1}^M\sigma_\ell^2\right)^{\frac{1}{M}}}
=\frac{1}{\left(\prod_{\ell=1}^M\sigma_\ell^2\right)^{\frac{1}{M}}}.
\label{eq:tau}
\end{IEEEeqnarray}
It is known that, under certain regularity conditions, the maximizer of (\ref{eq:tau}) is the Karhunen-Lo\`{e}ve transform
(KLT) $\mb{Q}_{\textrm{KL}}$ which diagonalizes the spatial correlation matrix $\mb{C}$
defined by (\ref{eq:C_mtx}). KLT is also the minimizer of the truncation error
\begin{IEEEeqnarray}{rCl}
\varepsilon_m\defeq{\frac{1}{M}}\mathbb{E}\left\{\left\|\mb{u}-\mb{u}^{(m)}\right\|^2\right\},~\mb{ u}=\mb{Q r}.
\end{IEEEeqnarray}
Hence we have
\begin{lemma}
The best dimension reduction of $\mb h$ is achieved by
setting $\mb{Q}_m$ as the one that consists of the eigenvectors
associated with the largest $m$
eigenvalues of $\mb{C}$. 
The use of KLT basis guarantees
\begin{IEEEeqnarray}{rCl}
  \tau(\mb{Q}_{\textrm{KL}})\big|_{M=M_1}>\tau(\mb{Q}_{\textrm{KL}})\big|_{M=M_2}
\label{mono}
\end{IEEEeqnarray}
for two BS antenna arrays of sizes $M_1 > M_2$.
\label{cor:1}
\end{lemma}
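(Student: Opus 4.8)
The plan is to treat the lemma's two assertions in turn: first the optimality of the top-$m$ KLT eigenvectors, then the monotonicity (\ref{mono}).

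For the first sentence I would argue as follows. For any fixed $\mb{Q}\in\mathcal{U}(M)$ the energy retained after discarding the coordinates indexed by $\mathcal{F}_m^c$ is $\sum_{\ell\in\mathcal{F}_m}\sigma_\ell^2$ with $\sigma_\ell^2=\mb{q}_\ell^H\mb{C}\mb{q}_\ell$, so the truncation objective in (\ref{prob:minrank}) is minimized, for that $\mb{Q}$, by keeping the $m$ columns with the largest $\sigma_\ell^2$. Minimizing over all bases then reduces to maximizing $\sum_{\ell=1}^m\mb{q}_\ell^H\mb{C}\mb{q}_\ell$ over orthonormal systems $\{\mb{q}_\ell\}_{\ell=1}^m$; by the Ky~Fan / Rayleigh--Ritz extremal characterization this maximum equals $\sum_{i=1}^m\lambda_i(\mb{C})$ and is attained by the eigenvectors of the $m$ largest eigenvalues. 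This is precisely the energy-compactness property of the KLT already quoted, so the first claim follows.

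For (\ref{mono}) I would first normalize the coding gain. Since $[\mb{C}]_{\ell\ell}=1$ we have $\tr(\mb{C})=M$, so the arithmetic mean of the eigenvalues is $1$ for every $M$; hence for the KLT basis the numerator of (\ref{eq:tau}) is $1$ and $\tau(\mb{Q}_{\textrm{KL}})\big|_{M}=(\det\mb{C}_M)^{-1/M}$. Proving (\ref{mono}) is therefore equivalent to showing that the eigenvalue geometric mean $g_M\defeq(\det\mb{C}_M)^{1/M}$ strictly decreases in $M$. Here I would exploit the ULA structure: because $[\mb{C}]_{ij}$ depends only on $i-j$ (both the $e^{-j2\pi d_{ij}\sin\phi/\lambda}$ factor and $[\bar{\mb{\Phi}}]_{ij}$ do), $\mb{C}$ is Hermitian Toeplitz, so for $M_1>M_2$ the matrix $\mb{C}_{M_2}$ is the leading principal submatrix of $\mb{C}_{M_1}$ and the family $\{\mb{C}_M\}$ is nested. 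Partitioning $\mb{C}_M$ so that its leading $(M-1)\times(M-1)$ block is $\mb{C}_{M-1}$, its last diagonal entry is $1$, and its last off-diagonal column is $\mb{b}$, the Schur-complement identity gives $\det\mb{C}_M=\det\mb{C}_{M-1}\,(1-\mb{b}^H\mb{C}_{M-1}^{-1}\mb{b})$. Thus $r_M\defeq\det\mb{C}_M/\det\mb{C}_{M-1}$ is exactly the minimum linear prediction error of the last array sample given the preceding $M-1$. Stationarity (the Toeplitz property) makes this error depend only on the number of conditioning samples and renders $\{r_M\}$ non-increasing, and strictly so whenever $\mb{C}$ is nonsingular. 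Finally $g_M=(\prod_{k=1}^M r_k)^{1/M}$ is the running geometric mean of the positive non-increasing sequence $\{r_k\}$; since $r_M\le\min_{k<M}r_k\le g_{M-1}$ (with strict inequality under nonsingularity), appending $r_M$ lowers the geometric mean, i.e. $g_M<g_{M-1}$, which gives (\ref{mono}).

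The delicate point, and the step I expect to be the main obstacle, is the monotonicity of the one-step prediction error $r_M$: it fails for a generic nested positive-definite family and genuinely requires the shift-invariance supplied by the Toeplitz structure. I would make it rigorous either through the prediction-theoretic argument above or, more explicitly, via the Levinson--Durbin recursion, which exhibits $r_M$ as a product of reflection-coefficient factors $\prod(1-|k_i|^2)$ each bounded by $1$. Both routes rest on the ULA assumption, and some care is needed to state the non-degeneracy condition (nonsingular $\mb{C}$, i.e. a genuinely spread AoA) under which the inequality in (\ref{mono}) is strict rather than merely non-strict.
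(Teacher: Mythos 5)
The paper does not actually prove this lemma: both claims are obtained by appeal to classical transform-coding results --- the KLT's energy-compactness/optimality and the block-size monotonicity of the KLT coding gain for stationary sources --- cited from the Rao--Yip handbook (\cite{KLT}, Ch.~1) and \cite{KLT2}. Your write-up supplies the proofs behind those citations, and it does so by the standard route: the Ky~Fan / Rayleigh--Ritz extremal principle for the optimality of the top-$m$ eigenvectors, and, for (\ref{mono}), the normalization $\tau(\mb{Q}_{\textrm{KL}})\big|_M=(\det\mb{C}_M)^{-1/M}$ followed by the Schur-complement/Levinson--Durbin identification of $r_M\defeq\det\mb{C}_M/\det\mb{C}_{M-1}$ as a one-step linear prediction error, whose monotonicity in $M$ rests on the Toeplitz structure that a ULA induces on $\mb{C}$. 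A genuine merit of your argument over the paper's citation is that it surfaces a hypothesis the paper leaves tacit: without stationarity (Toeplitz $\mb{C}$, i.e.\ the ULA geometry) the claim (\ref{mono}) is simply false --- e.g.\ adjoining an uncorrelated antenna to a strongly correlated pair \emph{lowers} the KLT coding gain --- so the nesting of principal submatrices alone, which holds for any array, is not enough. Your counterexample-driven insistence on this point is correct and is exactly what the paper's presentation obscures.

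One step needs repair: your strictness condition. Nonsingularity of $\mb{C}$ does not make $\{r_M\}$ strictly decreasing, nor (\ref{mono}) strict. For a spatially white channel, $\mb{C}=\mb{I}_M$ is nonsingular yet $r_k=1$ for every $k$, so $\tau(\mb{Q}_{\textrm{KL}})\equiv 1$ for all $M$ and (\ref{mono}) holds only with equality; at the other extreme, a fully coherent channel makes $\mb{C}_M$ singular for $M\geq 2$, both coding gains are infinite, and strictness again fails. In your own framework the correct statement is: $g_M=g_{M-1}$ iff either all of $r_1,\dots,r_M$ coincide (which, since $r_1=1$, forces $\mb{C}_M=\mb{I}_M$ by the equality case of Hadamard's inequality) or $g_{M-1}=0$; hence strict inequality in (\ref{mono}) for $M_1>M_2$ requires $\mb{C}_{M_2}$ nonsingular \emph{and} $\mb{C}_{M_1}\neq\mb{I}_{M_1}$, i.e.\ genuine spatial correlation, not merely nonsingularity. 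This is a minor repair --- and in fairness the paper's unconditional strict inequality suffers from the same white-channel counterexample --- but as stated your claim that nonsingularity yields strict decrease is the one step of the argument that does not go through.
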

The asymptotic coding gain ($M \rightarrow \infty$) of a large-scale
MIMO system can easily be derived by the transform coding analogy as well.

{\it Lemma \ref{cor:1}} reveals the strictly increasing nature of $\tau(\mb{Q}_{\textrm{KL}})$,
therefore, the asymptotic coding gain is also a coding gain upper bound which increases
as the spectrum roughens, i.e., as the degree of spatial correlation increases.
We conclude that increasing $M$ and/or spatial correlation improves the RR channel
modeling efficiency.

\subsection{Predetermined Bases}
Despite of its modeling efficiency, KLT basis is nonflexible in that it is
channel-dependent and computationally intensive. Prior to the channel estimation,
eigen-decomposition and eigenvalue ordering must be performed on the spatial
correlation matrix $\mb{C}$, which varies from user to user and can be accurately
estimated only if sufficient observations are collected. As a result, we resort
to signal-independent predetermined bases.
The use of predetermined unitary matrices $\mb{Q}_m$ in (\ref{C_only})
or (\ref{RRApp}) avoids the needs of eigen-decomposing $\mb{\Phi}$ or $\mb C$ and ordering the eigenvectors to
obtain the associated KLT basis. Two candidate bases are of special interest to us for
their analogousness to the KLT basis.

\subsubsection{Polynomial Basis \cite{Model_based}}
For a correlated environment having a smooth spatial correlation function,
polynomial basis vectors are adequate to model the spatial variation. An
orthonormal discrete polynomial basis $\mb{Q}=[\mb{q}_1,\cdots,\mb{q}_M]$
is constructed via QR decomposition $\mb{U}=\mb{Q}\mb{R}$, where
$[\mb{U}]_{ij}=(i-1)^{j-1}$, $\forall~ i,j=1,\cdots,M$.
As $\mb{q}_i$'s are arranged in the ascending order of degrees, the RR basis $\mb{Q}_m$
is able to track more rapid variation parts of the channel vector 
by including more higher-degree columns.


\subsubsection{Type-2 Discrete Cosine Transform (DCT)
Basis \cite{DCT}} \label{sec:DCT}
DCT, especially Type-2 DCT (DCT-2 or simply DCT), is widely used for image coding for
its excellent energy compaction capability \cite{DCT,Oppenheim}. For a finite-length
real sequence, its DCT is often energy-concentrated in lower-indexed coefficients as
an $N$-point DCT is equivalent to a $2N$-point DFT of a sequence obtained by cascading
the original $N$-point sequence with its mirrored version. As mentioned in {\it Remark
\ref{remark:LPM_merits}}, the correlation matrix $\mb{C}$ of the LPM channel vector is
approximately real for NB channels with small AS although ${\mb \Phi}$ is complex-valued
in general.
Hence, (\ref{h_hat}) would require a modeling order $m$ smaller than that needed by
$\hat{\mb h}'_m$ if $\phi$ is known or can be reliably estimated.

We want to remark that there are fast algorithms for computing DCT that require
$\mathcal O(M\log_2M)$ operations as FFT does. It is also known that the energy
compaction efficiency of DCT is perhaps the closest to that of KLT \cite{KLT,KLT2}
in the high correlation regime among popular unitary transforms such as Walsh-Hadamard,
Slant, Haar, and discrete Legendre transforms. The last one is a polynomial-based
transform. We compare the efficiencies of the DCT and polynomial bases in representing
NB MIMO channels in Section \ref{section:simulation}.

\section{Further Performance Analysis}
\label{sec:RR_MO}
We now examine the relations of $m$ with other system and channel parameters for a given basis.
\subsection{Effect of Array Size}
Recall that (\ref{eqn:bias_term}) implies that the estimation bias is minimized by the semi-unitary
basis matrix $\left(\mb{Q}_{\textrm{KL}}\right)_{m}$, which consists of the $m$ eigenvectors associated
with the largest $m$ eigenvalues of $\mb{C}$.
Therefore, denote by $\mathcal{U}(M,m)$
the set of all $M \times m$ semi-unitary matrices, we have
\begin{IEEEeqnarray}{rCl}
\underset{\mb{Q}_m\in\mathcal{U}(M,m)}{\arg\min}\MSE(\hat{\mb h}_m)
=\underset{\mb{Q}_m\in\mathcal{U}(M,m)}{\arg\min}b(\hat{\mb h}_m)
=\left(\mb{Q}_{\textrm{KL}}\right)_{m},\notag
\end{IEEEeqnarray}
where the first equality is because $\var\{{\hat{\mb{h}}}_m\}$ is independent of the basis used.
\begin{corollary}[Effect of $M$ {\cite[Ch. 1]{KLT}}]
For two BS antenna arrays of sizes $M_1>M_2$ and a fixed $m$, the biases
associated with the channel estimator (\ref{h_hat}) satisfy
\begin{IEEEeqnarray}{rCl}
\frac{1}{M}b(\hat{\mb h}_m)
\big|_{M=M_1}
&<&\frac{1}{M}b(\hat{\mb h}_m)
\big|_{M=M_2},
\end{IEEEeqnarray}
if $\mb{Q}_m=\left(\mb{Q}_{\textrm{KL}}\right)_{m}$ and $\hat{\phi}=\phi$.
The normalized MSE (NMSE) also has the monotonicity property:
\begin{IEEEeqnarray}{rCl}
\NMSE(\hat{\mb{h}}_m)\big|_{M=M_1}&\defeq&
\frac{1}{M}\MSE(\hat{\mb{h}}_m)\big|_{M=M_1}\notag\\
&<&\NMSE(\hat{\mb{h}}_m)\big|_{M=M_2}.
\end{IEEEeqnarray}
\label{remark:11}
\end{corollary}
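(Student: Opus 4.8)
The plan is to reduce both inequalities to a single statement about the normalized truncation error of the KLT, and then to drive that statement with the energy-compaction monotonicity already proved in Lemma \ref{cor:1}. First I would specialize the bias formula. With $\hat{\phi}=\phi$ and $\mb{Q}=\mb{Q}_{\textrm{KL}}$, the bias matrix $\hat{\mb{B}}=\mb{Q}_{\textrm{KL}}^H\mb{C}\mb{Q}_{\textrm{KL}}$ of (\ref{Cu}) is diagonal with entries equal to the ordered eigenvalues $\lambda_1\geq\cdots\geq\lambda_M$ of $\mb{C}$. Taking $\mb{Q}_m=(\mb{Q}_{\textrm{KL}})_m$ forces $\mathcal{F}_m=\{1,\dots,m\}$, so Lemma \ref{thm:MSE} gives $b(\hat{\mb{h}}_m)=\sum_{\ell=m+1}^M\lambda_\ell$. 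Because $[\mb{C}]_{\ell\ell}=1$ makes $\tr\mb{C}=\sum_\ell\lambda_\ell=M$, I can write $\frac{1}{M}b(\hat{\mb{h}}_m)=1-\frac{1}{M}\sum_{\ell=1}^m\lambda_\ell=\varepsilon_m$, i.e. the normalized bias is exactly the normalized KLT truncation error. The first inequality is thus the assertion that $\varepsilon_m$ strictly decreases in $M$ for fixed $m$.

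Next I would set up the nesting that lets the two array sizes be compared. For a ULA with fixed spacing the (LPM) correlation matrix is Toeplitz, so $[\mb{C}]_{ij}$ depends only on $i-j$ and $\mb{C}^{(M_2)}$ is a leading principal submatrix of $\mb{C}^{(M_1)}$ whenever $M_1>M_2$. Cauchy interlacing then relates the two spectra through $\lambda_\ell^{(M_1)}\geq\lambda_\ell^{(M_2)}\geq\lambda_{\ell+M_1-M_2}^{(M_1)}$, which controls how the captured energy $S(M)\defeq\sum_{\ell=1}^m\lambda_\ell^{(M)}$ evolves. The target $\frac{S(M_1)}{M_1}>\frac{S(M_2)}{M_2}$ would then be obtained by combining this interlacing with the strict coding-gain monotonicity of Lemma \ref{cor:1}, which states that the KLT spectrum becomes strictly more peaked as $M$ grows; a more peaked, energy-preserving spectrum concentrates a larger fraction of the total power $M$ in the leading $m$ coordinates.

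The NMSE claim then follows with no further spectral work. By (\ref{eqn:var_term}) we have $\NMSE(\hat{\mb{h}}_m)=\frac{m}{M\beta\|\mb{p}\|^2}+\frac{1}{M}b(\hat{\mb{h}}_m)$, in which the variance term $\frac{m}{M\beta\|\mb{p}\|^2}$ is manifestly strictly decreasing in $M$ for fixed $m$; adding it to the decreasing bias term established above yields the stated monotonicity of the NMSE.

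The main obstacle is the middle step: passing from Lemma \ref{cor:1}, a statement about the geometric mean of all $M$ eigenvalues, to the monotonicity of $\varepsilon_m$, which depends only on the arithmetic mean of the $M-m$ smallest eigenvalues. These are different spectral functionals, and Cauchy interlacing alone only forces $S(M)$ to be nondecreasing, not $S(M)/M$ to increase. I would therefore have to exploit the specific smooth, slowly varying structure of the single-cluster spatial spectrum, equivalently controlling the tail eigenvalues via the Szeg\H{o}/Toeplitz distribution of $\mb{C}$, so that normalization by the growing $M$ goes in the claimed direction; making this concentration quantitative for a fixed modeling order $m$ is the delicate part, and is precisely where any implicit regularity hypothesis on the spectrum must be invoked.
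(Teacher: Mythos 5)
Your first and last steps are exactly the paper's own (largely implicit) argument, and they are correct: with $\hat{\phi}=\phi$ and the KLT basis, Lemma \ref{thm:MSE} reduces the bias to the tail eigenvalue sum $b(\hat{\mb{h}}_m)=\sum_{\ell=m+1}^{M}\lambda_\ell$ of $\mb{C}$, so $\frac{1}{M}b(\hat{\mb{h}}_m)$ is precisely the normalized KLT truncation (basis-restriction) error; and since $\frac{1}{M}\var\{\hat{\mb{h}}_m\}=\frac{m}{M\beta\|\mb{p}\|^2}$ is trivially strictly decreasing in $M$ for fixed $m$, the NMSE claim follows once the bias claim is in hand. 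Where you and the paper part ways is the middle step: the paper never proves the bias monotonicity at all --- the bracketed citation \cite[Ch. 1]{KLT} in the corollary's heading \emph{is} the proof, i.e., the monotone improvement of KLT energy compaction with block size is imported from the transform-coding literature, with Lemma \ref{cor:1} offered only as supporting intuition. You instead attempt an actual derivation and correctly discover that it cannot be closed with the stated tools.

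The gap you identify is genuine and, in fact, essential rather than technical. Your Toeplitz-nesting observation is valid (for a ULA, $[\mb{C}]_{ij}=e^{j2\pi(i-j)\xi\sin\phi/\lambda}[\mb{\Phi}]_{ij}$ depends only on $i-j$, so $\mb{C}$ at size $M_2$ is a leading principal submatrix of $\mb{C}$ at size $M_1$), but Cauchy interlacing yields only the unnormalized inequality $\sum_{\ell\leq m}\lambda_\ell^{(M_1)}\geq\sum_{\ell\leq m}\lambda_\ell^{(M_2)}$, and the coding gain $\tau$ of Lemma \ref{cor:1} is an arithmetic-to-geometric-mean ratio of the \emph{whole} spectrum, which places no useful lower bound on the fraction of energy in the top $m$ coordinates. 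No combination of these two facts can prove the corollary, because without a further hypothesis on the spectrum the statement is false: for $\mb{C}=\mb{I}_M$ (any unitary basis is then a KLT) the normalized bias equals $1-m/M$, which strictly \emph{increases} with $M$; more tellingly, for any nonzero AS the paper's own Remark \ref{RM:ChCorr_AoA} (via \cite[Thm. 2]{JSDM}) says the number of significant eigenvalues grows like $\rho M$ with $\rho>0$, so a fixed $m$ captures an energy fraction of order $m/(\rho M)$ and $\frac{1}{M}b(\hat{\mb{h}}_m)\rightarrow 1$ as $M\rightarrow\infty$, so the claimed strict decrease cannot hold for all $M_1>M_2$ at fixed $m$. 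Your closing suspicion is therefore exactly right: a strong-correlation/regularity hypothesis (in effect, the strongly peaked, small-AS, moderate-$M$ regime in which the paper's simulations operate) must be invoked, and that hypothesis is precisely what the citation to \cite{KLT} conceals; neither interlacing nor Lemma \ref{cor:1} can substitute for it.
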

\subsection{Optimal Modeling Order} \label{subsec:SSFCanalysis}
To analyze the effect of the modeling order on the MSE performance of the estimator, we define
\begin{definition}
The \textit{optimal} modeling order $m^{\star}$ is the one
that minimizes $\MSE(\hat{\mb{h}}_m)$ when $\hat{\phi}=\phi$ or $\MSE(\hat{\mb{h}}'_m)$.
\label{def:opt_mod_ord}
\end{definition}
\subsubsection{Uncorrelated Channels}
We denote by
\begin{equation}
  \mathrm{\alpha}=\beta\|\mb{p}\|^2/T \label{eq:SNR}
\end{equation}
the average pre-detection pilot SNR per spatial sample at the BS and rewrite the estimator MSE as
\begin{IEEEeqnarray}{rCl}
\MSE(\hat{\mb{h}}_m)
=\frac{m}{\beta\|\mb{p}\|^2}+\tr(\mb{D}_m)
=M-m\left(1-\frac{1/T}{\alpha}\right),\nonumber\\
\label{eqn:27}
\end{IEEEeqnarray}
where the first equality is resulted from the zero spatial correlation
assumption (i.e., $\mb{B}=\mb{I}_M$). The above equation predicts that,
for low SNR ($\SNR< 1/T$), a larger modeling order does not help as one just
tries to fit more parameters to noise-dominant observations. However, if
$\SNR> 1/T$, the MSE improves with increasing modeling order $m$. We conclude
that
\begin{corollary}
The optimal modeling order for uncorrelated channels is $M$ if $\SNR>1/T$.
\end{corollary}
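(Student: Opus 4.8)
The plan is to work directly from the closed-form MSE expression (\ref{eqn:27}), which already packages both the variance and bias contributions for the uncorrelated case into a single affine function of the modeling order $m$. First I would make explicit why the bias term collapses so cleanly: for an uncorrelated channel $\mb{\Phi}=\mb{I}_M$, so $\mb{C}=\mb{W}^H(\phi)\mb{\Phi}\mb{W}(\phi)=\mb{I}_M$ regardless of $\phi$ (since $\mb{W}(\phi)$ is diagonal unitary), whence $\hat{\mb{B}}=\mb{Q}^H\mb{C}\mb{Q}=\mb{I}_M$ as well. The angle-alignment step therefore has no effect on the MSE here, and the bias (\ref{eqn:bias_term}) reduces to $\tr(\mb{D}_m)=M-m$. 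Combined with the variance (\ref{eqn:var_term}) and the substitution $\beta\|\mb{p}\|^2=\SNR T$, this yields $\MSE(\hat{\mb{h}}_m)=M-m(1-\tfrac{1/T}{\SNR})$.

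Next I would treat the MSE as a function of the integer modeling order $m\in\{1,2,\cdots,M\}$ and simply read off its slope. The expression is affine in $m$ with coefficient $-(1-\tfrac{1/T}{\SNR})$, so the entire argument reduces to a sign check on this slope. The slope is strictly negative exactly when $1-\tfrac{1/T}{\SNR}>0$, i.e. when $\SNR>1/T$.

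Finally, in that regime the MSE is strictly decreasing in $m$, so over the admissible bounded range $1\le m\le M$ its unique minimizer is the right endpoint $m=M$; hence $m^{\star}=M$, which is the claim.

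The argument is elementary once (\ref{eqn:27}) is in hand, so there is no serious analytic obstacle. The only points needing care are (i) justifying that $\hat{\mb{B}}=\mb{I}_M$, so that the bias is exactly $M-m$ and the phase estimate $\hat{\phi}$ drops out of the MSE altogether, and (ii) being explicit that $m$ ranges over a finite set bounded above by $M$, so that monotone decrease forces the optimum to the upper endpoint rather than to an interior stationary point (of which there is none, the function being affine).
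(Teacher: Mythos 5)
Your proof is correct and follows essentially the same route as the paper: it specializes the bias term to $\tr(\mb{D}_m)=M-m$ under $\mb{B}=\mb{I}_M$, arrives at the affine expression $\MSE(\hat{\mb{h}}_m)=M-m\left(1-\frac{1/T}{\SNR}\right)$ of (\ref{eqn:27}), and reads off the sign of the slope to place the minimizer at $m=M$. The only difference is that you make explicit two points the paper leaves implicit (why $\hat{\mb{B}}=\mb{I}_M$ independently of $\hat{\phi}$, and the endpoint argument over the finite range of $m$), which is a welcome tightening rather than a departure.
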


\subsubsection{Correlated Channels}\label{ssec:Ana_CorrCh}
As indicated by (\ref{eqn:23})--(\ref{eqn:bias_term}) and Section \ref{sec:basis},
to minimize the estimation MSE, the energy compactness of $\mb{B}$ has to be considered. For a given
basis, the diagonal terms become more and more nonuniform with increasing spatial correlation.
By {\it Definition \ref{def:opt_mod_ord}} the optimal modeling order $m^\star$ is given by
\begin{equation}
 (m^\star,\mathcal{F}_{m^\star})=\arg\min_{(m,\mathcal{F}_{m})}\left(\frac{m}{T\cdot\SNR}
 +\sum_{\ell \in \mathcal{F}^c_{m}}[\mb{B}]_{\ell\ell}\right).
\end{equation}
The above optimal modeling order is difficult to determine especially when $M \gg 1$. However, the
channel spectrum $[{\mb{B}}]_{\ell\ell}$ or $[\tilde{\mb{B}}]_{\ell\ell}$, as shown in Fig. \ref{fig:sketch_D},
is usually concentrated in a relative narrow band (set of consecutive frequency indices).
Furthermore, the LPM channel spectrum (solid curve), as predicted in {\it Remark \ref{remark:LPM_merits}}
and Section \ref{sec:DCT}), is a lowpass one and whose bandwidth is only about half that of the regular
channel spectrum without LPM, i.e., this figure also verifies the advantage of the model (\ref{RRApp})
over (\ref{C_only}). For both models, the {\it passband} boundaries are located at the bottom of the `waterfall'
regions and most of the channel energy lies within the subspace spanned by those columns of $\mb{Q}$ whose
indices belong to the corresponding {\it passband}. To simplify the task of determining the optimal modeling
order, we henceforth restrict $\mathcal{F}_{m}$ to be of the form $\{k, k+1, \cdots, k+m-1\}$ for some
$1\leq k\leq M-m+1$ and define
\begin{definition}
The bias-reduction modeling order $m_{\eta}$ and the associated subspace index set $\mathcal{F}_{\eta}$ of
efficiency $\eta$, $0<\eta<1$, are
\begin{IEEEeqnarray}{rCl}
(m_{\eta}, \mathcal{F}_{\eta}) = \arg \min \left\{m: \frac{\sum_{\ell\in\mathcal{F}_m}
[\mb{B}]_{\ell\ell}}{M}>\eta\right\}
\label{eq:m_c}
\end{IEEEeqnarray}
where each candidate $\mathcal{F}_m$ consists of $m$ consecutive integers in $\{1,2,\cdots,M\}$. For the
convenience of subsequent discussion, we refer to the subspace $\mathrm{Span}(\mb{q}_i, i \in \mathcal{F}_m)$ as the
{\it dominant subspace} and $\mathcal{F}_m$ the {\it dominant support}.
\label{def:m_c}
\end{definition}
Note that $m_{\eta}$ and $\mathcal{F}_{\eta}$ do not require the SNR information and results in negligible
energy loss if $\eta$ is chosen to be close to $1$, say $0.99$. It is is also much easier to find $m_{\eta}$
than $m^\star$. This definition can be extended to model (\ref{C_only}) directly by replacing $\mb B$ with
$\tilde{\mb B}$. Both modeling orders reflect the energy compactness of the transformed channel spectrum;
they decrease when the spatial correlation increases or equivalently, the dominant rank of $\mb{\Phi}$
decreases. Furthermore, \textit{Definition \ref{def:m_c}} and (\ref{eqn:bias_term}) imply that, for any
$\mathcal{F}_m \supset \mathcal{F}_{\eta}$,
\begin{IEEEeqnarray}{rCl}
  \frac{\sum_{\ell\in\mathcal{F}^c_m} [\mb{B}]_{\ell\ell}}{M} < 1-\eta,~\forall~m \geq m_{\eta},
\end{IEEEeqnarray}
i.e., using a dominant support $\mathcal{F}_m$ containing $\mathcal{F}_{\eta}$ ensures that
$b({\hat{\mb{h}}_m}) <M(1-\eta)$.
\begin{remark}
\label{RM:MSE_obs}
When $\eta\approx 1$, increasing $m$ beyond $m_{\eta}$ cannot improve and, in the variance-dominant region,
$\SNR \ll \frac{m_{\eta}}{MT(1-\eta)}$, even degrades the MSE performance. This is because in this region when
$m \geq m_{\eta}$, the MSE grows approximate linearly with $m$, i.e.,
\begin{IEEEeqnarray*}{rCl}
  \MSE({\hat{\mb{h}}}_m)\approx\var\{\hat{\mb{h}}_m\}
  =\frac{m}{\beta\|\mb{p}\|^2}.
\end{IEEEeqnarray*}
Therefore, in the low SNR ($\SNR$) regime, for $m_{\eta} \leq m_1<m_2$,
\begin{IEEEeqnarray}{rCl}
  \MSE({\hat{\mb{h}}}_{m_1})<\MSE({\hat{\mb{h}}}_{m_2}).
 \label{obs:1}
\end{IEEEeqnarray}
On the other hand, in the bias-dominant region, MSE increases with decreasing $m$ for $m < m_{\eta}$.
This can be easily seen by observing that when $m_1 < m_2 < m_{\eta}$
\begin{IEEEeqnarray}{rCl}
\MSE({\hat{\mb{h}}}_{m_1})-\MSE({\hat{\mb{h}}}_{m_2})
&\approx& \sum_{\ell\in\mathcal{F}^c_{m_1}\setminus\mathcal{F}^c_{m_2}}[\mb{B}]_{\ell\ell}>0.
\label{obs:2}
\end{IEEEeqnarray}
\end{remark}
This remark 
will be verified in Section \ref{section:simulation} via simulation
with some typical correlated channels.

\subsection{SNR Effect on Optimal Modeling Order}
\label{subsec:SNR_eff}
We now investigate the SNR influence on the optimal modeling order. \textit{Lemma \ref{thm:MSE}}
says that, for a given modeling order and $\mathcal{F}_m$ pair, $\MSE({\hat{\mb{h}}}_{m})$ depends
on SNR only, therefore, we have
\begin{corollary}
For a fixed modelling order $m\leq M$,
\begin{equation}
  \MSE({\hat{\mb{h}}}_{m})\big|_{\SNR=\SNR_1}
\leq\MSE({\hat{\mb{h}}}_{m})\big|_{\SNR=\SNR_2}
\end{equation}
if and only if $\SNR_1\geq\SNR_2$.
\end{corollary}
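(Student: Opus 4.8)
The plan is to collapse the claim to the strict monotonicity of a single elementary function of $\SNR$, exploiting the fact already recorded in \textit{Lemma \ref{thm:MSE}} that the MSE splits into a variance term and a bias term whose dependence on the system parameters is completely explicit. The point of the preceding sentence in the text — that for a fixed $(m,\mathcal{F}_m)$ pair the MSE ``depends on SNR only'' — is exactly the structural fact I would formalize.

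First I would write the decomposition $\MSE(\hat{\mb{h}}_m)=\var\{\hat{\mb{h}}_m\}+b(\hat{\mb{h}}_m)$ from (\ref{eqn:23}) and substitute the closed forms (\ref{eqn:var_term}) and (\ref{eqn:bias_term}). Using the definition $\SNR=\beta\|\mb{p}\|^2/T$ in (\ref{eq:SNR}), i.e. $\beta\|\mb{p}\|^2=\SNR\,T$, the variance becomes $m/(\SNR\,T)$, while the bias equals $\sum_{\ell\in\mathcal{F}^c_m}[\hat{\mb{B}}]_{\ell\ell}$. With $m$ and the support $\mathcal{F}_m$ held fixed and $\hat{\phi}=\phi$ (so that $\hat{\mb{B}}=\mb{B}=\mb{Q}^H\mb{C}\mb{Q}$ in (\ref{Cu}) is determined entirely by the channel second-order statistics $\mb{C}=\mathbb{E}\{\mb{r}\mb{r}^H\}$), the bias is a nonnegative constant that carries no dependence on $\SNR$. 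Hence
\begin{IEEEeqnarray}{rCl}
\MSE(\hat{\mb{h}}_m)&=&\frac{m}{\SNR\,T}+\sum_{\ell\in\mathcal{F}^c_m}[\mb{B}]_{\ell\ell}.
\end{IEEEeqnarray}
It then remains to observe that, for any integer $m\geq 1$ and fixed $T>0$, the right-hand side is strictly decreasing in $\SNR>0$: only the first summand varies and its derivative $-m/(\SNR^2\,T)$ is strictly negative. Strict monotonicity delivers both directions of the equivalence simultaneously, since comparing the two operating points reduces the inequality $\MSE(\hat{\mb{h}}_m)\big|_{\SNR=\SNR_1}\leq\MSE(\hat{\mb{h}}_m)\big|_{\SNR=\SNR_2}$ to $1/\SNR_1\leq 1/\SNR_2$, equivalently $\SNR_1\geq\SNR_2$, with equality holding precisely when $\SNR_1=\SNR_2$.

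The only point requiring care — rather than a genuine obstacle — is justifying that the bias is truly SNR-independent. The subtlety is that the estimated phase $\hat{\phi}$ in (\ref{eqn:W_hati}) is computed from the noisy observation $\mb{Y}\mb{p}$ and so in principle depends on $\SNR$; this is exactly why the statement is made under the convention $\hat{\phi}=\phi$ already adopted in \textit{Definition \ref{def:opt_mod_ord}}, under which $\hat{\mb{B}}$ reduces to the deterministic matrix $\mb{B}$ and the bias becomes a fixed constant. Once this convention is invoked, the remaining argument is purely the monotonicity of $m/(\SNR\,T)$ and needs no further estimate.
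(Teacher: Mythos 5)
Your proof is correct and follows essentially the same route as the paper: there the corollary is asserted as an immediate consequence of Lemma \ref{thm:MSE}, since for a fixed $(m,\mathcal{F}_m)$ pair the MSE equals $m/(T\cdot\SNR)$ plus an SNR-independent bias, which is exactly the monotonicity argument you spell out. Your extra care in noting that the convention $\hat{\phi}=\phi$ is what renders the bias term $\hat{\mb{B}}$-based constant (and hence SNR-independent) is a sound clarification of a point the paper leaves implicit.
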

On the other hand, it can be shown that
if $m_1<m_2$, then $\mathcal{F}_{m_1}\subset\mathcal{F}_{m_2}$, where
 \begin{IEEEeqnarray}{rCl}
(m_j,\mathcal{F}_{m_j})&=&\arg\min_{(m,\mathcal{F}_{m})}\MSE({\hat{\mb{h}}}_{m})\big|_{\SNR=\SNR_j},
\label{eqn:m1}
 \end{IEEEeqnarray}
and so
\begin{lemma}
For two average pre-detection SNRs, $\SNR_1<\SNR_2$, the corresponding MSE-minimizing solutions
\begin{IEEEeqnarray}{rCl}
(m_j,\mathcal{F}_{m_j})&=&\arg\min_{(m,\mathcal{F}_{m})}\MSE({\hat{\mb{h}}}_{m})\big|_{\SNR=\SNR_j},
~j=1,2,~
\label{eqn:m2}
\end{IEEEeqnarray}
satisfy $m_1 \leq m_2$.
\label{lemma:MSE_SNR}
\end{lemma}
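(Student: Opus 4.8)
The plan is to recognize that the entire SNR dependence of the objective is confined to a single additive term that is linear in the modeling order $m$, and then to settle the monotonicity by a standard argmin-exchange (monotone comparative statics) argument that requires neither convexity nor uniqueness of the minimizers.

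First I would combine \emph{Lemma \ref{thm:MSE}} with the SNR definition (\ref{eq:SNR}) to write, for every admissible pair $(m,\mathcal{F}_m)$,
\[
\MSE(\hat{\mb{h}}_m)\big|_{\SNR} = \frac{m}{T\,\SNR} + \sum_{\ell\in\mathcal{F}^c_m}[\hat{\mb{B}}]_{\ell\ell}.
\]
The crucial structural observation is that the bias matrix $\hat{\mb{B}}$ (taken at $\hat\phi=\phi$ as stipulated in \emph{Definition \ref{def:opt_mod_ord}}, so that it equals the deterministic $\mb{Q}^H\mb{C}\mb{Q}$; equivalently $\tilde{\mb{B}}=\mb{Q}^H\mb{\Phi}\mb{Q}$ for the regular estimator $\hat{\mb{h}}'_m$) does \emph{not} depend on $\SNR$. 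Hence the second term is an SNR-independent function $B(m,\mathcal{F}_m):=\sum_{\ell\in\mathcal{F}^c_m}[\hat{\mb{B}}]_{\ell\ell}$ of the pair, while the first is the product of the cardinality $m$ with the per-dimension penalty $\lambda:=1/(T\,\SNR)$. Since $\SNR_1<\SNR_2$ is equivalent to $\lambda_1>\lambda_2$, raising the SNR strictly lowers the cost charged per retained dimension.

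Next I would invoke the exchange inequalities. Abbreviating the optimizers as $x_j=(m_j,\mathcal{F}_{m_j})$, the optimality of $x_1$ at $\SNR_1$ and of $x_2$ at $\SNR_2$ gives
\[
\lambda_1 m_1 + B(x_1) \le \lambda_1 m_2 + B(x_2), \qquad \lambda_2 m_2 + B(x_2) \le \lambda_2 m_1 + B(x_1).
\]
Adding the two inequalities cancels the bias terms and leaves $(\lambda_1-\lambda_2)(m_1-m_2)\le 0$. Because $\lambda_1-\lambda_2>0$, this forces $m_1-m_2\le 0$, i.e., $m_1\le m_2$, which is exactly the assertion. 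Note that this reasoning holds for \emph{any} pair of minimizers on the two sides, so it sidesteps questions of ties and does not require the stronger nesting property $\mathcal{F}_{m_1}\subset\mathcal{F}_{m_2}$ quoted in the preceding discussion.

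The step I expect to demand the most care is the first: justifying rigorously that the bias matrix entering $B(m,\mathcal{F}_m)$ carries no SNR dependence, so that the objective genuinely separates into the form $\lambda m + B$. The cleanest route is to work within the regime of \emph{Definition \ref{def:opt_mod_ord}} with $\hat\phi=\phi$ (making the LPM bias matrix the deterministic $\mb{Q}^H\mb{C}\mb{Q}$) or to state the result for $\hat{\mb{h}}'_m$, whose bias matrix $\tilde{\mb{B}}=\mb{Q}^H\mb{\Phi}\mb{Q}$ is manifestly SNR-free. Once the separation is established, the remaining argument is purely combinatorial and immediate.
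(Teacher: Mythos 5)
Your proof is correct, and it is a genuinely cleaner variant of the paper's argument. Both proofs start from the same two optimality inequalities (evaluate the $\SNR_1$-optimal pair and the $\SNR_2$-optimal pair against each other), and both rest on the same structural fact from Lemma \ref{thm:MSE} that the objective separates as $\frac{m}{T\cdot\SNR}+\sum_{\ell\in\mathcal{F}^c_m}[\mb{B}]_{\ell\ell}$ with an SNR-independent bias term (which you correctly pin down by taking $\hat{\phi}=\phi$ per Definition \ref{def:opt_mod_ord}). The difference is in how the inequalities are exploited. The paper subtracts the bias sums explicitly, which forces a three-way case analysis ($m_1=m_2$, $m_1<m_2$, $m_1>m_2$) and, crucially, relies on the nesting property $\mathcal{F}_{m_1}\subset\mathcal{F}_{m_2}$ (asserted before the lemma without proof) to reduce $\sum_{\ell\in\mathcal{F}^c_{m_1}}[\mb{B}]_{\ell\ell}-\sum_{\ell\in\mathcal{F}^c_{m_2}}[\mb{B}]_{\ell\ell}$ to a single signed sum; it then derives sandwich bounds on the SNRs and rules out $m_1>m_2$ by contradiction. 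Your argmin-exchange argument instead adds the two inequalities so that the bias terms $B(x_1)$, $B(x_2)$ cancel identically, giving $(\lambda_1-\lambda_2)(m_1-m_2)\le 0$ in one line. This buys you three things the paper's route does not have: no case analysis, no dependence on the unproven nesting claim, and validity for arbitrary (possibly non-unique, tied) minimizers. In that sense your proof is not only correct but closes a small rigor gap in the paper's own treatment.
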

\begin{proof}
Since
$m_1$ and $m_2$ are respectively the MSE-minimizing orders
corresponding to $\SNR=\SNR_1$ and $\SNR_2$ and
$\MSE({\hat{\mb{h}}}_{m_1})\big|_{\SNR=\SNR_1}
\leq\MSE({\hat{\mb{h}}}_{m_2})\big|_{\SNR=\SNR_1}$,
\begin{IEEEeqnarray}{rCl}
\frac{m_2-m_1}{T\cdot\SNR_1}&\geq&\sum_{\ell\in \mathcal{F}^c_{m_1}}
[\mb{B}]_{\ell\ell}-\sum_{\ell\in \mathcal{F}^c_{m_2}}[\mb{B}]_{\ell\ell}\nonumber\\
&=&\left\{
\begin{array}{rl}
0,~~~~~~~~~~&\text{if }m_1=m_2;\\
\sum_{\ell\in\mathcal{F}^c_{m_1}\setminus\mathcal{F}^c_{m_2}}[\mb{B}]_{\ell\ell},&\text{if }m_1<m_2;\\
-\sum_{\ell\in\mathcal{F}^c_{m_2}\setminus\mathcal{F}^c_{m_1}}[\mb{B}]_{\ell\ell},&\text{if }m_1>m_2.
\end{array}\right.
\notag
\end{IEEEeqnarray}
For $\SNR_2$, $\MSE({\hat{\mb{h}}}_{m_2})\big|_{\SNR=\SNR_2}
\leq\MSE({\hat{\mb{h}}}_{m_1})\big|_{\SNR=\SNR_2}$ gives
\begin{IEEEeqnarray}{rCl}
\frac{m_2-m_1}{T\cdot\SNR_2}&\leq&
\left\{
\begin{array}{rl}
0,~~~~~~~~~~&\text{if }m_1=m_2;\\
\sum_{\ell\in\mathcal{F}^c_{m_1}\setminus\mathcal{F}^c_{m_2}}[\mb{B}]_{\ell\ell},&\text{if }m_1<m_2;\\
-\sum_{\ell\in\mathcal{F}^c_{m_2}\setminus\mathcal{F}^c_{m_1}}[\mb{B}]_{\ell\ell},&\text{if }m_1>m_2.
\end{array}\right.
\notag
\end{IEEEeqnarray}
Thus, it is possible that $m_1=m_2$ and the other two possibilities lead to
\begin{IEEEeqnarray}{rCl}
\left\{
\begin{array}{rl}
\SNR_1\leq\displaystyle{\frac{m_2-m_1}{T\sum_{\ell\in\mathcal{F}^c_{m_1}\setminus\mathcal{F}^c_{m_2}}[\mb{B}]_{\ell\ell}}}
\leq\SNR_2,&\text{if }m_1<m_2;\vspace{.3em}\\
\SNR_2\leq\displaystyle{\frac{m_1-m_2}{T\sum_{\ell\in\mathcal{F}^c_{m_2}\setminus\mathcal{F}^c_{m_1}}[\mb{B}]_{\ell\ell}}}
\leq\SNR_1,&\text{if }m_1>m_2.
\end{array}\right.\notag
\end{IEEEeqnarray}
The case $m_1>m_2$ results in a contradiction to the assumption that $\SNR_1<\SNR_2$, we thus conclude that $m_1\leq m_2$.
\end{proof}

\section{Rank Determination and Mean AoA Estimation}
\label{subsec:MOD}
\begin{algorithm}
    \caption{Iterative Modeling Order Determination (IMOD)}
    \label{alg:IMOD}
    \textbf{Initialization:} Set $\hat{\phi}=0$ and termination criterion $\mb{\chi}_T$.
    \begin{itemize}
    \item[\textbf{1}] (Updating bias matrix) Calculate
\begin{IEEEeqnarray}{rCl}
\hat{\mb{B}}=\mb{Q}^H\mb{W}^H(\hat{\phi})\mb{\Phi}\mb{W}(\hat{\phi})\mb{Q}.
\end{IEEEeqnarray}
\item[\textbf{2}] (Finding dominant subspace) Solve (\ref{eq:m_c}) with $\mb{B}=\hat{\mb{B}}$ to obtain an estimate of
    $({m_\eta},\mathcal{F}_{m_\eta})$ and construct $\mb Q_m$.
    \item[\textbf{3}] (Updating mean AoA) With $m=\hat{m}_\eta$, find $\hat{\phi}$ via (\ref{eqn:W_hati}).
    \item[\textbf{4}] (Recursion) Go to Step \textbf{1}; or terminate and output $\hat{m}_\eta$, $\hat{\phi}$,
and $\mb{Q}_m$ if ${\mb \chi}_T$ is satisfied.
\end{itemize}
\end{algorithm}
\begin{algorithm}
    \caption{Low-Complexity Mean AoA Search}
    \label{alg:Fast_mAoA_Search}
    \textbf{Initialization:} Set step size $\mu$, adjustment ratio $\kappa$, threshold $t$,
    window size $w$, DCT basis $\mb Q$, and termination criterion $\mb{\chi}_T$
\begin{itemize}
    \item[\textbf{1}] (Coarse estimation) Randomly draw a $\hat{\phi}$ from $[-\frac{\pi}{2},
    \frac{\pi}{2}]$. Repeat until $\hat{\phi}$ satisfies $\mb q_1^H\mb{W}^H(\hat{\phi})\mb{\Phi}
    \mb{W}(\hat{\phi})\mb q_1\geq t$.
    \item[\textbf{2}] (Fine adjustment) 
Let $\hat{\phi}^+=\min\{\hat{\phi}+\mu,\frac{\pi}{2}\}$ and $\hat{\phi}^-=\max\{-\frac{\pi}{2},
\hat{\phi}-\mu\}$.
Find
\[\hat{\phi}:=\underset{\theta\in\{\hat{\phi},\hat{\phi}^+,\hat{\phi}^-\}}{\arg\min}~
\sum_{\ell=M-w+1}^M \mb q_\ell^H\mb{W}^H(\theta)\mb{\Phi}\mb{W}(\theta)\mb q_\ell.\]
\item[\textbf{3}] (Recursion) Let $\mu:={\mu}/{\kappa}$ and go to Step {\bf 2}; or terminate
and output $\hat{\phi}$, if $\mb{\chi}_T$ is satisfied.
\end{itemize}
\end{algorithm}
In discussing the optimal or near-optimal modeling order, we assume that the true bias matrix
$\mb{B}=\mb{Q}^H\mb{W}^H(\phi)\mb{\Phi}\mb{W}(\phi)\mb{Q}$ or at least its diagonal entries
(the channel's energy spectrum) is known. As $\phi$ is not known a priori, we propose a joint
rank and mean AoA ($\phi$) estimation method called iterative modeling order determination
(IMOD) algorithm by assuming known $\mb{\Phi}$ whose estimation is addressed later in this
section. As shown in the ensuing section, $\hat{\phi}$ and $\hat{m}_{\eta}$ do not need accurate
estimate of $\mb{\Phi}$, hence neither does $\hat{\bf h}_m$. The IMOD algorithm is summarized as
{\bf Algorithm} \ref{alg:IMOD}.
If we are interested in finding the mean AoA only, an alternate search method we refer to as {\bf Algorithm} 2,
which computes the channel spectrum sequentially, is more efficient. This algorithm is based on the fact
mentioned in Section \ref{ssec:Ana_CorrCh} that the $\phi$-aligned channel spectrum with respect to the DCT basis is a lowpass one
(cf. Fig. \ref{fig:sketch_D}) and the dominant support is the set $\{1,2,\cdots,m\}$.

%
%
%
%

Once the modeling order and mean AoA are determined, we then proceed to estimate the channel vector according to (\ref{h_hat}) or (\ref{h'_hat}), which needs to know
the dominant rank and support
only instead of the complete information about $\mb\Phi$.
Compared with other rank estimation methods \cite{Model_based,MOD_1,MOD_2}, the IMOD algorithm finds ${m_\eta}$ that gives
near-minimum MSE with much less complexity and faster convergence.
In contrast to the singular value decomposition (SVD)-based methods \cite{MOD_1,MOD_2} which are applicable only if the KLT basis is used,
the IMOD method is basis-independent and do not have to perform SVD.
In many cases, the spatial correlation
\begin{IEEEeqnarray}{rCl}
\mb{\Phi}&=&\frac{1}{\gamma^2}\left(\mathbb{E}\left\{\mb{Y}
\mb{p}\left(\mb{Y}\mb{p}\right)^H\right\}-\|\mb{p}\|^2\mb{I}_M\right)
\notag\\&\defeq&
\frac{1}{\gamma^2}\left(\mb{\Psi}-\|\mb{p}\|^2\mb{I}_M\right)
\label{eqn:est_SCM}
\end{IEEEeqnarray}
is not available and has to be estimated prior to applying the IMOD algorithm.
An estimate can be derived by first noticing that the ML estimate for $\mb{\Psi}$ is
\begin{IEEEeqnarray}{rCl}
\hat{\mb{\Psi}}=\frac{1}{J}\sum_{i=1}^J\mb{Y}_i\mb{p}_i
\left(\mb{Y}_i\mb{p}_i\right)^H\label{eq:Psi}
\end{IEEEeqnarray}
where $\mb{Y}_i$ and $\mb{p}_i$ denote respectively the samples received and pilot transmitted
at the $i$th training period, and $J$ is the number of pilot periods used. Substituting (\ref{eq:Psi})
into (\ref{eqn:est_SCM}), we then obtain an estimate of $\mb{\Phi}$. Unlike the KLT-based
estimate which needs accurate spatial correlation knowledge, for a predetermined basis CE, all
one has to know is the index range where most of the channel energy $[\hat{\mb{B}}]_{\ell\ell}$
lies, the detail shape within the range is of not much concern; see (\ref{eq:m_c}). Since the
spatial correlation and LSFCs often vary slowly in time, we can collect enough pilot samples from
multiple training blocks to obtain their reliable estimates. Simulation results reveal that $J=2$
or 3 is sufficient in most cases. The complete RR channel estimation process is described in Fig.
\ref{fig:rank_det_bl_diag}.
\begin{figure*}
    \centering
        \includegraphics[width=6in]{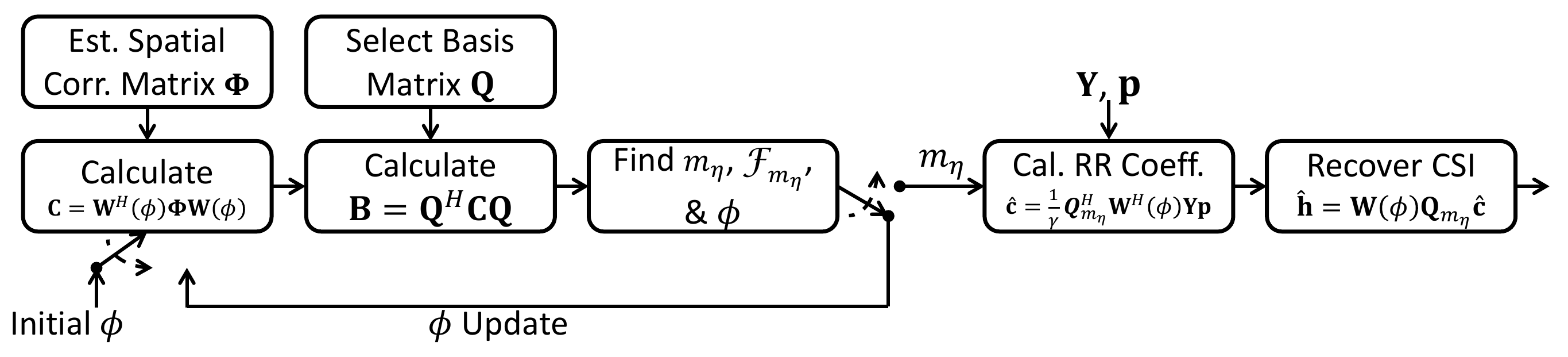}
        \caption{Block diagram of the proposed channel estimation and rank determination procedure.}
 \label{fig:rank_det_bl_diag}
\end{figure*}
\begin{remark}\label{RM:ChCorr_AoA}
It is important at this point to mention two known properties governing the mean AoA, AS, spatial
correlation, and channel rank. The first property is that the spatial correlation decreases
as ULA element spacing and/or the AS $\Delta$ increases \cite{TBW}. The second property further incorporates
the effect of mean AoA $\phi$, stating that the asymptotic normalized channel rank
$\lim_{M\rightarrow \infty} \frac{m}{M}=\rho$ is given by \cite[Thm. 2]{JSDM}
\begin{equation}
\rho=\min\left\{1, \frac{\xi}{\lambda}|\sin(\phi-\Delta)
-\sin(\phi + \Delta)|\right\},
\end{equation}
where $-\pi/2<\phi-\Delta<\phi+\Delta<\pi/2$.
The above equation indicates that for a fixed $\phi$, $\rho$ increases as $\Delta$ becomes larger and;
for a fixed $\Delta$, a $\phi$ that is closer to $0$ leads to a larger $\rho$. In the finite $M$ regime such behaviors
are also verified in \cite{SpCorr_AS&mAoA} and in the following section.
\end{remark}


\section{Numerical Results and Discussion}
\label{section:simulation}
In this section, we investigate the effects of key system design parameters such as
the basis used, modeling order, and channel parameters as SNR (\ref{eq:SNR}), channel correlation,
AS, and mean AoA on the RR channel estimator performance. Unless otherwise specified,
we assume a $100$-element ULA with antenna spacing $0.5\lambda$ and use the standardized
SCM channel model \cite{SCM} with a single cluster that contains $20$ subpaths.
\subsection{AoA Alignment, Bases, and Model Order}
In Section \ref{sec:DCT}, we have mentioned that the estimator $\hat{\mb h}_m$ and
$\hat{\mb h}'_m$ differ in that the former attempts to redistribute the average energy of
an LPM channel vector having an approximately real-valued correlation matrix in a reduced-dimension subspace while the latter has to deal with one that has complex-valued spatial
correlations. We therefore predicted that the LPM based estimator $\hat{\mb h}_m$ performs
better than $\hat{\mb h}'_m$, the one without the LPM operation, in that the former offers
more reduction on rank. Fig. \ref{fig:sketch_D} confirms such a prediction; it shows that one
needs to find only one index ${m_\eta}$ instead of two required by 
$\hat{\mb h}'_m$ which also has a inferior rank reduction capability. Mean AoA estimate is
a by-product of $\hat{\mb h}_m$.
\begin{figure}
    \centering
        \includegraphics[width=3.76in]{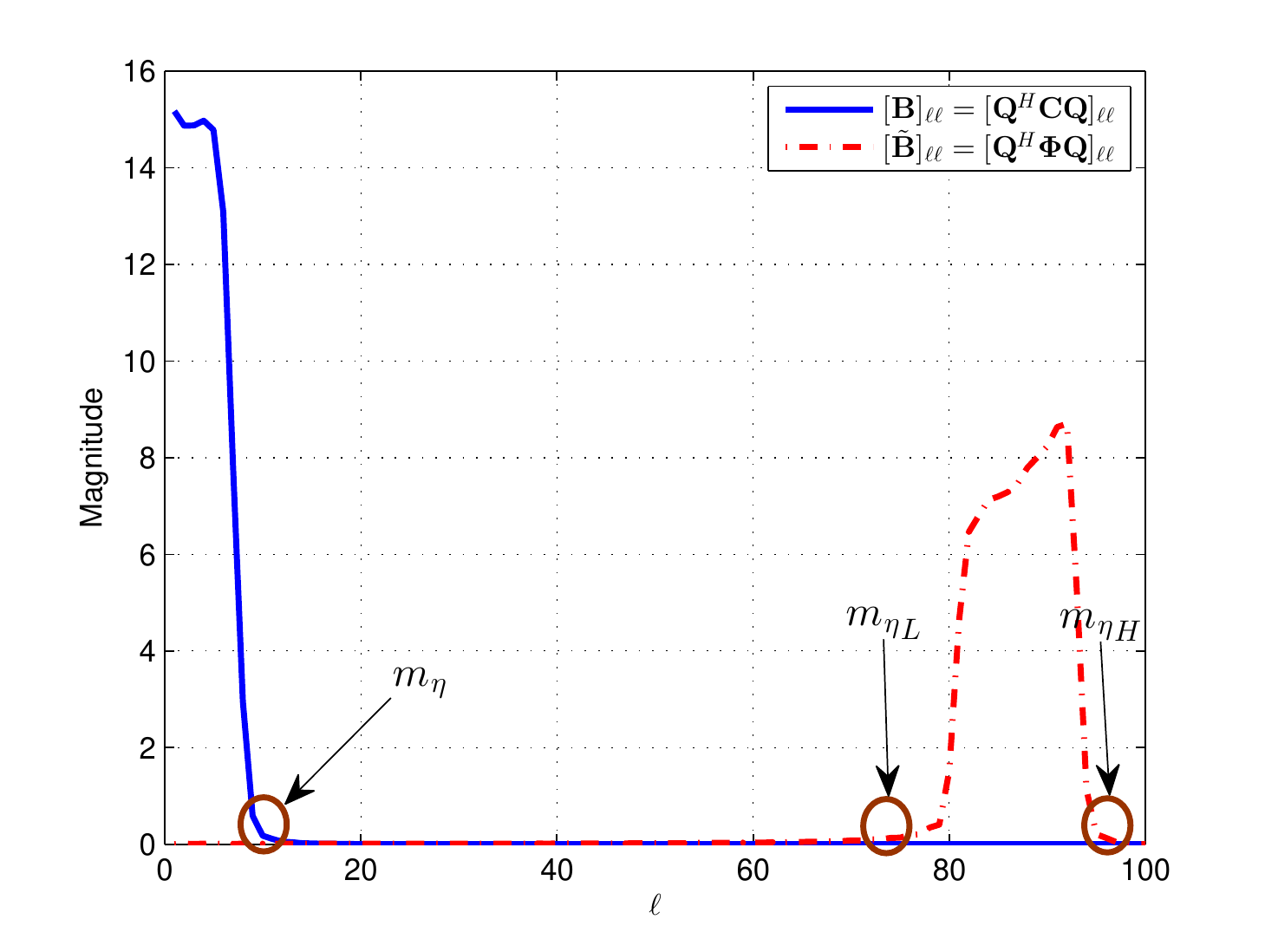}
\caption{Channel Energy spectra (the diagonal terms of bias matrices) with respect to the DCT
basis with (solid curve) and without (dotted curve) mean AoA alignment;
$\xi=\lambda/2$, $M=100$ $\AS=7.2^\circ$, known mean AoA $\phi=60^\circ$.}
 \label{fig:sketch_D}
\end{figure}
\begin{figure}
    \centering
        \includegraphics[width=3.76in]{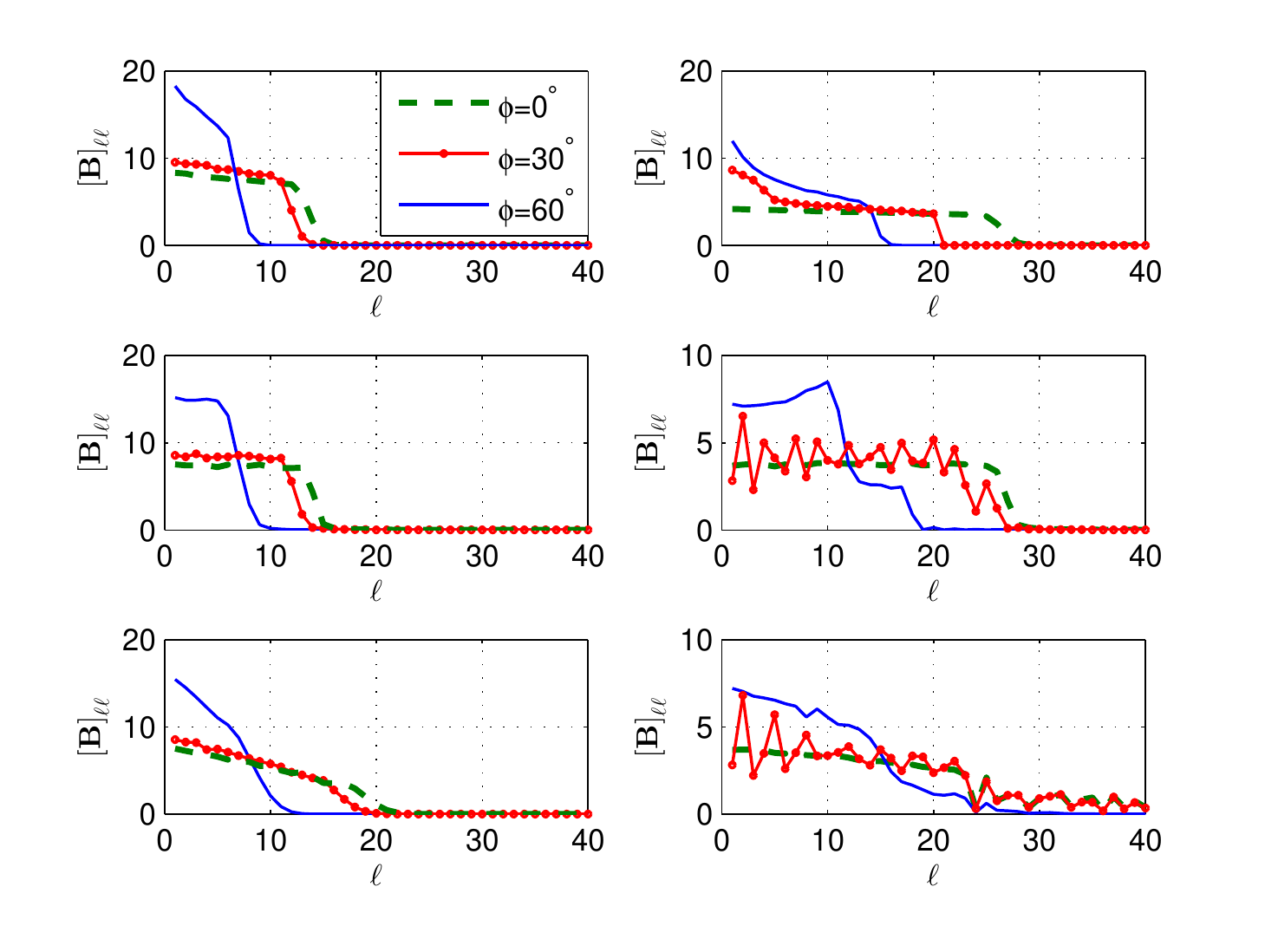}
        \caption{LPM channel energy spectra ($[\mb{B}]_{\ell\ell}$) seen by different bases.
        The left and right columns of sub-figures assume $\AS=7.2^\circ$ and $15^\circ$,
        respectively. The top, middle, and bottom rows are the spectra with respect to KLT, DCT, and
        polynomial bases.} \label{fig:as72aoa60_as15aoa0_KLT_DCT_Poly_Bdiag}
\end{figure}

Fig. \ref{fig:as72aoa60_as15aoa0_KLT_DCT_Poly_Bdiag} shows the energy
spectra of highly ($\AS=7.2^\circ$) and
moderately ($\AS=15^\circ$) correlated channels with respect to the KLT, DCT, and polynomial bases after LPM operation. For all three bases considered,
the channel rank is an increasing function of the AS and a decreasing function of
the mean AoA $\phi$. This trend has been predicted in {\it Remark \ref{RM:ChCorr_AoA}}.
As expected, the KLT basis is indeed the most efficient in that it requires the
least modeling order $m$, among the three bases, to characterize the $M$-dimensional channel.
The DCT and polynomial bases give similar performance
when AS is small but the polynomial basis degrades significantly at $\AS=15^\circ$.

\begin{figure}
    \centering
        \includegraphics[width=3.6in]{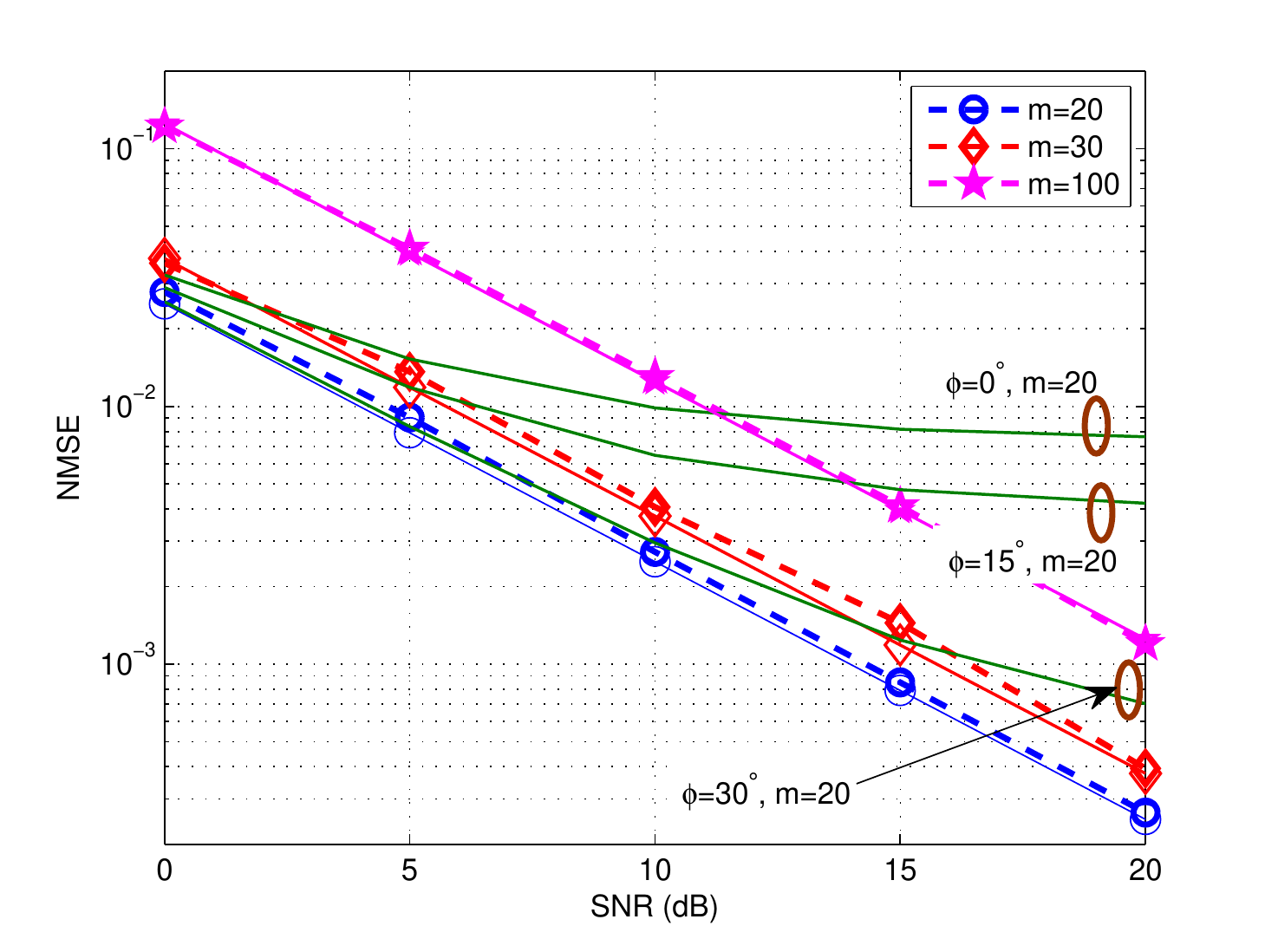}
        \caption{Effect of the modeling order on the MSE performance of the polynomial-based CE $\hat{\mb h}_m$
        with known LSFCs; $\AS=7.2^\circ$, $\phi=60^\circ$. Both theoretical MSE performance with known $\phi$ (solid curves) and simulated counterpart (dashed curves) are shown. Also shown is the theoretical performance as a
        function of $\phi$ when the modeling order is $m=20$.}
 \label{fig:model_order_72_poly}
\end{figure}
\begin{figure}
    \centering
        \includegraphics[width=3.6in]{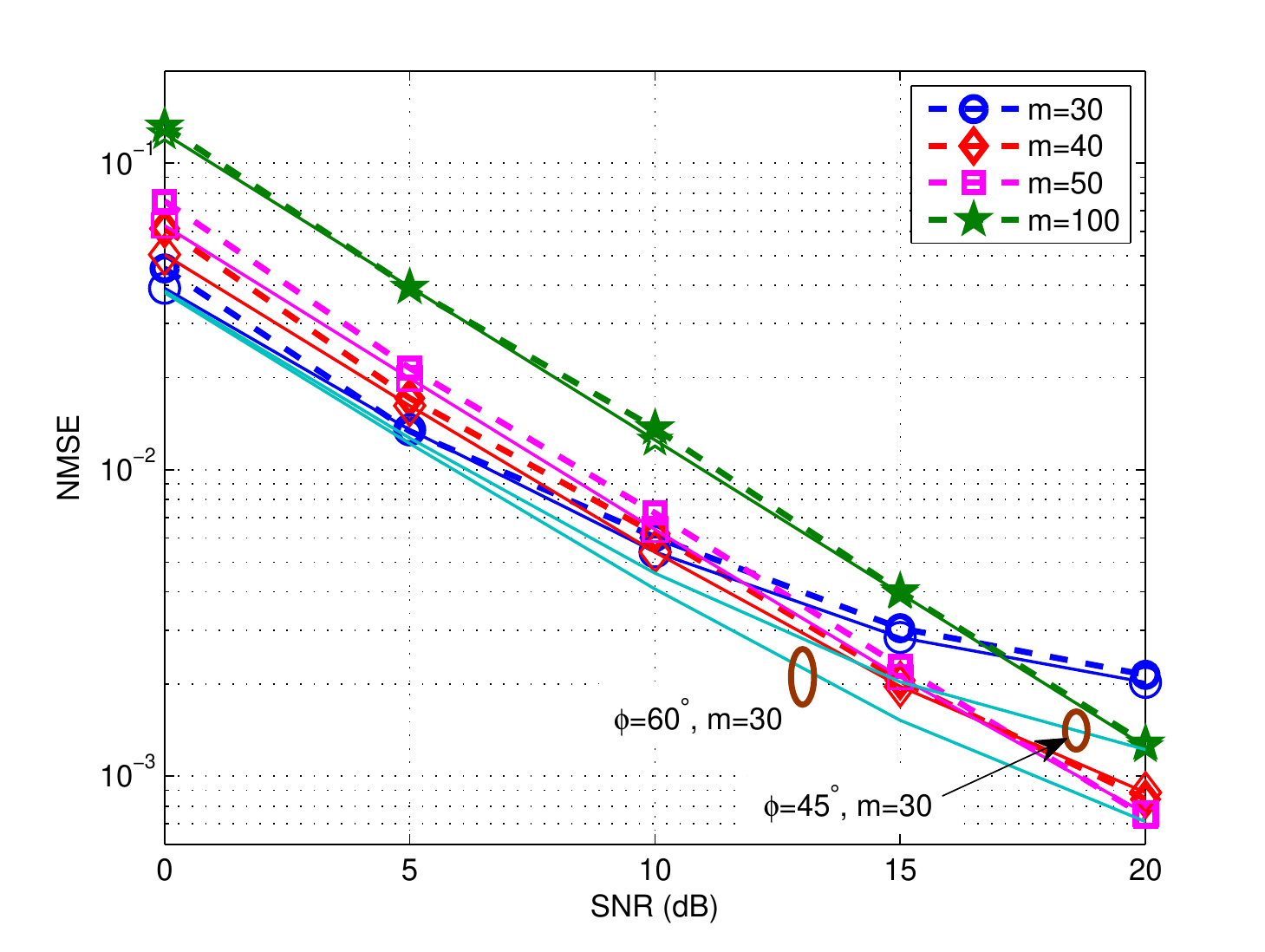}
        \caption{Rank selection effect on the DCT-based CE $\hat{\mb h}_m$'s simulated (dashed) and theoretical
        (solid curves) MSE performance with known LSFCs; $\AS=15^\circ$, $\phi=0^\circ$. The impact of $\phi$ on a
        fixed-$m$ estimator is also shown.}
 \label{fig:model_order_15_DCT}
\end{figure}

The effect of the modeling order on the DCT- and polynomial-based channel estimators' MSE
performance is examined in Figs. \ref{fig:model_order_72_poly} and \ref{fig:model_order_15_DCT}
for $(\AS,\phi)=(7.2^\circ,60^\circ)$ and $(15^\circ,0^\circ)$, respectively. When the AS is small
($7.2^\circ$), the spatial correlation is relatively high (cf. {\it Remark \ref{RM:ChCorr_AoA}})
so we need a small modeling order. Fig. \ref{fig:model_order_72_poly} shows that over-modeling the channel
(as $m$ increases from $20$ to $100$) in fact degrades the MSE performance. But when $\mathrm{AS}$
increases to $15^\circ$, the spatial correlation decreases and the spatial channel roughens, hence
a larger $m$ is called for, especially at higher SNR; see Fig. \ref{fig:model_order_15_DCT}. In both
figures we also show the MSE performance (solid curves without markers) for some selected mean AoAs.
We find that, for a fixed AS and SNR, the optimal modeling order is a decreasing function of $|\phi|$,
which is consistent with what {\it Remark \ref{RM:ChCorr_AoA}} and Fig.
\ref{fig:as72aoa60_as15aoa0_KLT_DCT_Poly_Bdiag} have predicted: the channel correlation increases
with decreasing $|\phi|$.

\begin{figure}
    \centering
        \includegraphics[width=3.76in]{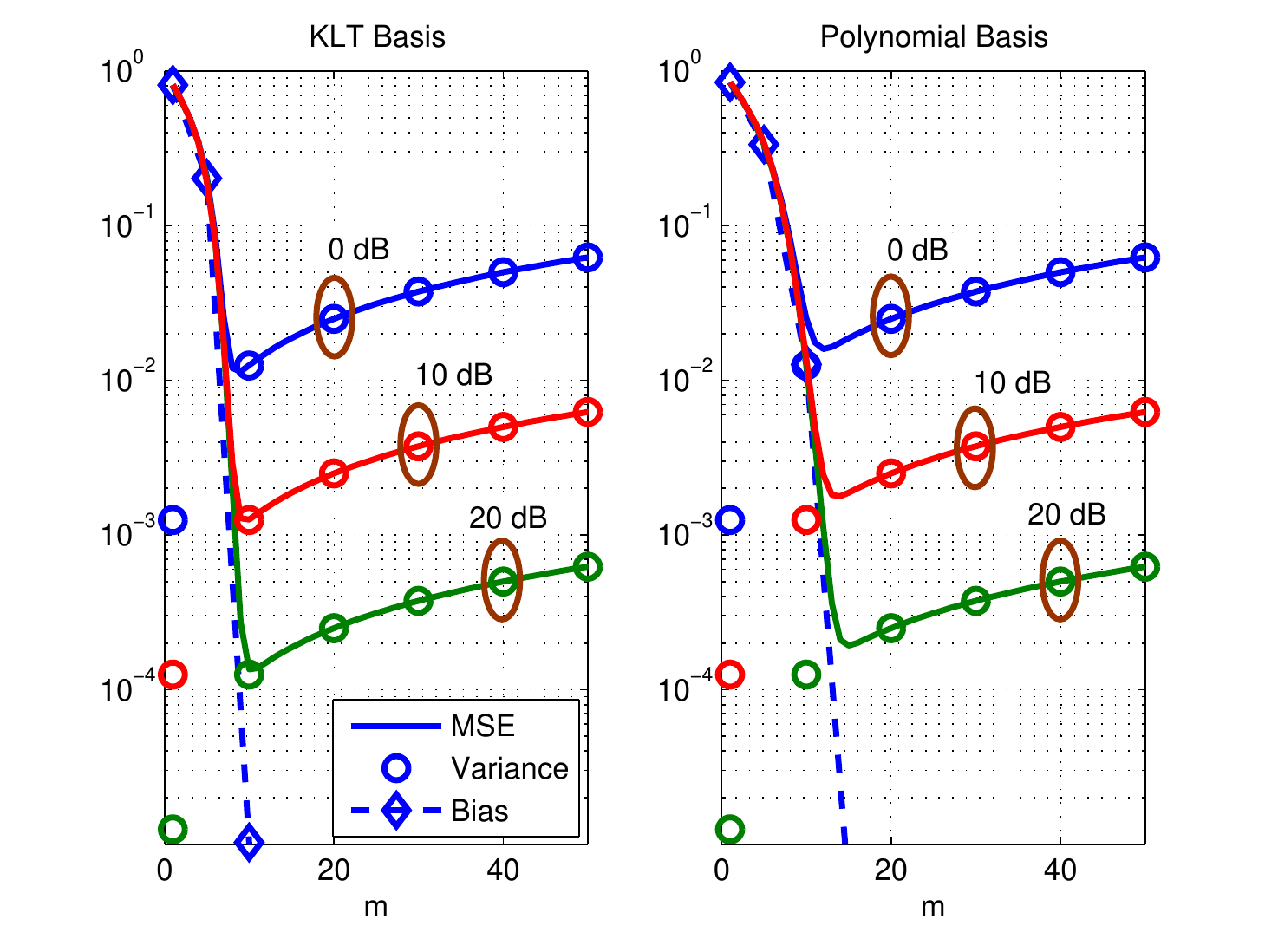}
        \caption{MSE, variance, and bias (all normalized by $M=100$) of the RR channel estimator
        versus modeling order and $\SNR$; $\AS=7.2^\circ$, known mean AoA $\phi=60^\circ$.}
 \label{fig:WC_vs_C_compare_AS72}
\end{figure}
\begin{figure}
    \centering
        \includegraphics[width=3.76in]{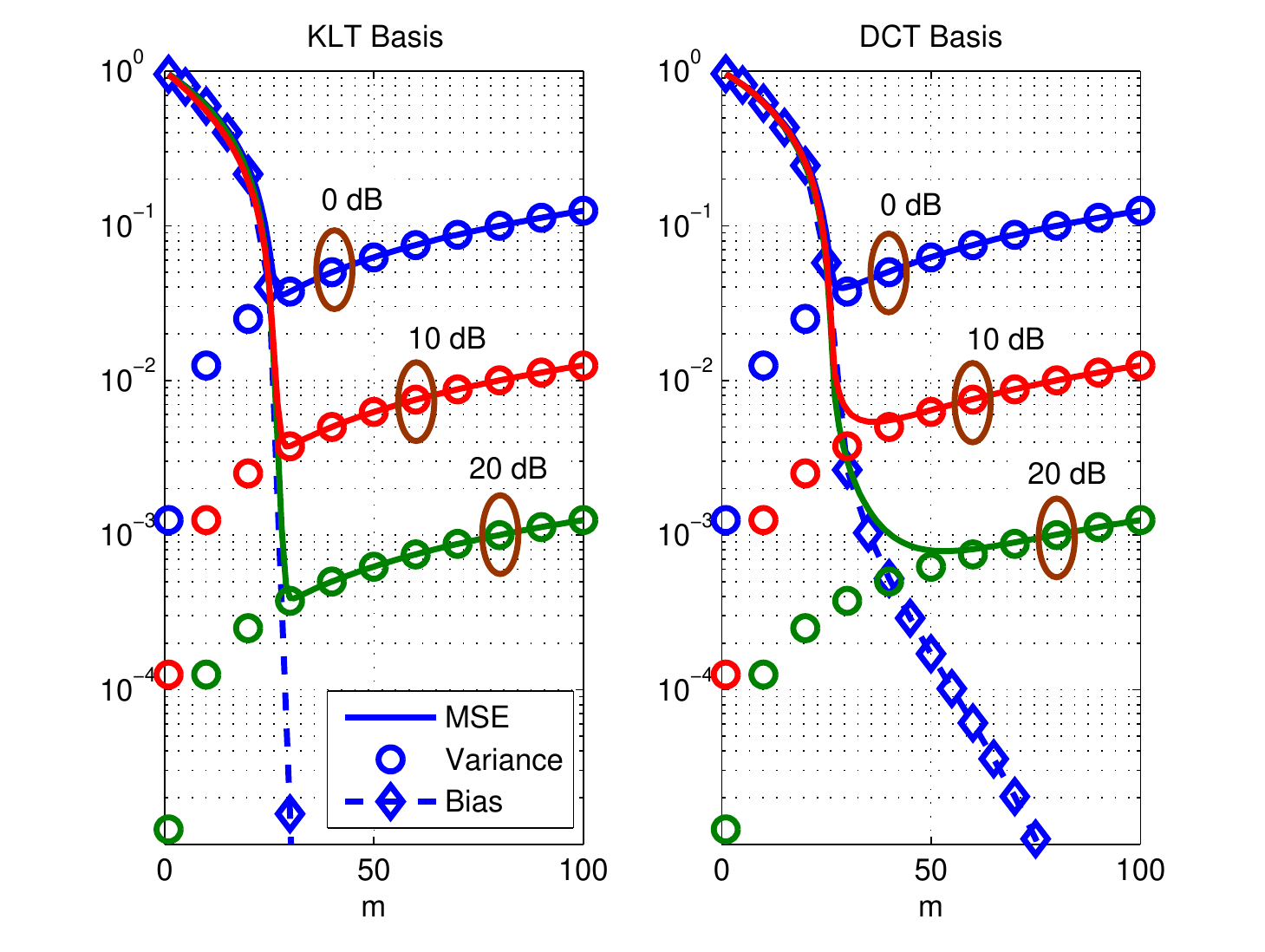}
        \caption{MSE, variance, and bias (all normalized by $M$) behaviors of the RR channel estimator
        versus $m$ and $\SNR$; $\AS=15^\circ$, known mean AoA $\phi=0^\circ$.}
 \label{fig:WC_vs_C_compare_AS15}
\end{figure}

In Figs. \ref{fig:WC_vs_C_compare_AS72} and \ref{fig:WC_vs_C_compare_AS15}, we plot the bias, variance,
and MSE with respect to modeling order $m$ of the estimator (\ref{h_hat}) in a high and moderate spatial correlated channels.
The MSE-minimizing modeling order $m^\star$, for each scenario corresponding to different bases and SNRs
can be found in these two figures. {\it Lemma \ref{lemma:MSE_SNR}} is numerically verified and the associated
$m^\star$ and ${m_\eta}$ are listed in Tables \ref{tab:IMOD_72} and \ref{tab:IMOD_15}. When SNR becomes larger,
MSE becomes more bias-dominant (cf. {\it Remark \ref{RM:MSE_obs}}) and so $\eta$ should be chosen to be closer
to $1$. When $\eta$ is so adaptive to SNR, ${m_\eta}$ is close to ${m}^\star$ and we have near-optimal
MSE performance. As expected, the KLT-based estimator requires the smallest modeling order and achieves the
best MSE performance.

\begin{table}
 \caption{Modeling orders which minimize the MSE of CEs and bias and those derived from 
IMOD algorithm for 
  $(\AS,\phi)=(7.2^\circ,60^\circ)$.}
 \centering
 \tabcolsep 0.05in
 \label{tab:IMOD_72}
\begin{tabular}{|C{.56in}|C{.8in}|C{.8in}|C{.8in}|}
 \hline
\multicolumn{4}{|c|}{KLT Basis} \\ \hline
 $\SNR$&$0$ dB &$10$ dB&$20$ dB \\ \hline
 ${m}^\star$ & $9$&$10$&$10$ \\ \hline
 ${m_\eta}$ ($\phi$, $\mb{\Phi}$ known) & 8\hspace{1in}($\eta=0.99$)&9\hspace{1in}($\eta=0.999$)&10 ($\eta=0.9999$) \\ \hline
 ${\hat{m}_\eta}$ ($\mb{\Phi}$ known)& $8.0$\hspace{1in}($\eta=0.99$)&$9.0$ ($\eta=0.999$)&$10.0$ ($\eta=0.9999$) \\ \hline
 ${\hat{m}_\eta}$ ($\hat{\mb{\Phi}}$, $J=10$)& $10.3$\hspace{1in}($\eta=0.9$)&$19.7$ ($\eta=0.99$)&$20.3$ ($\eta=0.999$) \\ \hline
 \hline
\multicolumn{4}{|c|}{Polynomial Basis} \\ \hline
 $\SNR$&$0$ dB &$10$ dB&$20$ dB \\ \hline
 ${m}^\star$ & $12$&$14$&$15$ \\ \hline
 ${m_\eta}$ ($\phi$, $\mb{\Phi}$ known) & 11\hspace{1in}($\eta=0.99$)&12 ($\eta=0.999$)&14 ($\eta=0.9999$) \\ \hline
 ${\hat{m}_\eta}$ ($\mb{\Phi}$ known)& $13.5$\hspace{1in}($\eta=0.99$)&$16.1$ ($\eta=0.999$)&$17.4$ ($\eta=0.9999$) \\ \hline
 ${\hat{m}_\eta}$ \small{($\hat{\mb{\Phi}}$, $J=10$)}& $9.0$\hspace{1in}($\eta=0.9$)&$12.0$ ($\eta=0.99$)& $13.6$ ($\eta=0.999$) \\ \hline
 \end{tabular}
\end{table}
\begin{table}
 \caption{Modeling orders which minimize the MSE of CEs and bias and those derived from the IMOD
 algorithm for $(\AS,\phi)=(15^\circ,0^\circ)$.}
 \centering
 \tabcolsep 0.05in
 \label{tab:IMOD_15}
\begin{tabular}{|C{.56in}|C{.8in}|C{.8in}|C{.8in}|}
 \hline
\multicolumn{4}{|c|}{KLT Basis} \\ \hline
 $\SNR$&$0$ dB &$10$ dB&$20$ dB \\ \hline
 ${m}^\star$ & $28$&$29$&$31$ \\ \hline
 ${m_\eta}$ ($\phi$, $\mb{\Phi}$ known) & 27\hspace{1in}($\eta=0.99$)&28 ($\eta=0.999$)&30 ($\eta=0.9999$) \\ \hline
 ${\hat{m}_\eta}$ ($\mb{\Phi}$ known)& $27.0$\hspace{1in}($\eta=0.99$)&$28.0$ ($\eta=0.999$)&$30.0$ ($\eta=0.9999$) \\ \hline
 ${\hat{m}_\eta}$ \small{($\hat{\mb{\Phi}}$, $J=10$)}& $10.9$\hspace{1in}($\eta=0.9$)&$20.1$ ($\eta=0.99$)& $21.5$ ($\eta=0.999$) \\ \hline
 \hline
\multicolumn{4}{|c|}{DCT Basis} \\ \hline
 $\SNR$&$0$ dB &$10$ dB&$20$ dB \\ \hline
 ${m}^\star$ & $29$&$37$&$53$ \\ \hline
 ${m_\eta}$ ($\phi$, $\mb{\Phi}$ known) & 27\hspace{1in}($\eta=0.99$)&36 ($\eta=0.999$)&56 ($\eta=0.9999$) \\ \hline
 ${\hat{m}_\eta}$ ($\mb{\Phi}$ known)& $28.3$ ($\eta=0.99$)&$37.7$ ($\eta=0.999$)&$59.1$ ($\eta=0.9999$) \\ \hline
 ${\hat{m}_\eta}$ ($\hat{\mb{\Phi}}$, $J=10$)& $24.8$\hspace{1in}($\eta=0.9$)&$28.6$ ($\eta=0.99$)&$37.2$ ($\eta=0.999$) \\ \hline
 \end{tabular}
\end{table}

\begin{figure}
    \centering
        \includegraphics[width=3.5in]{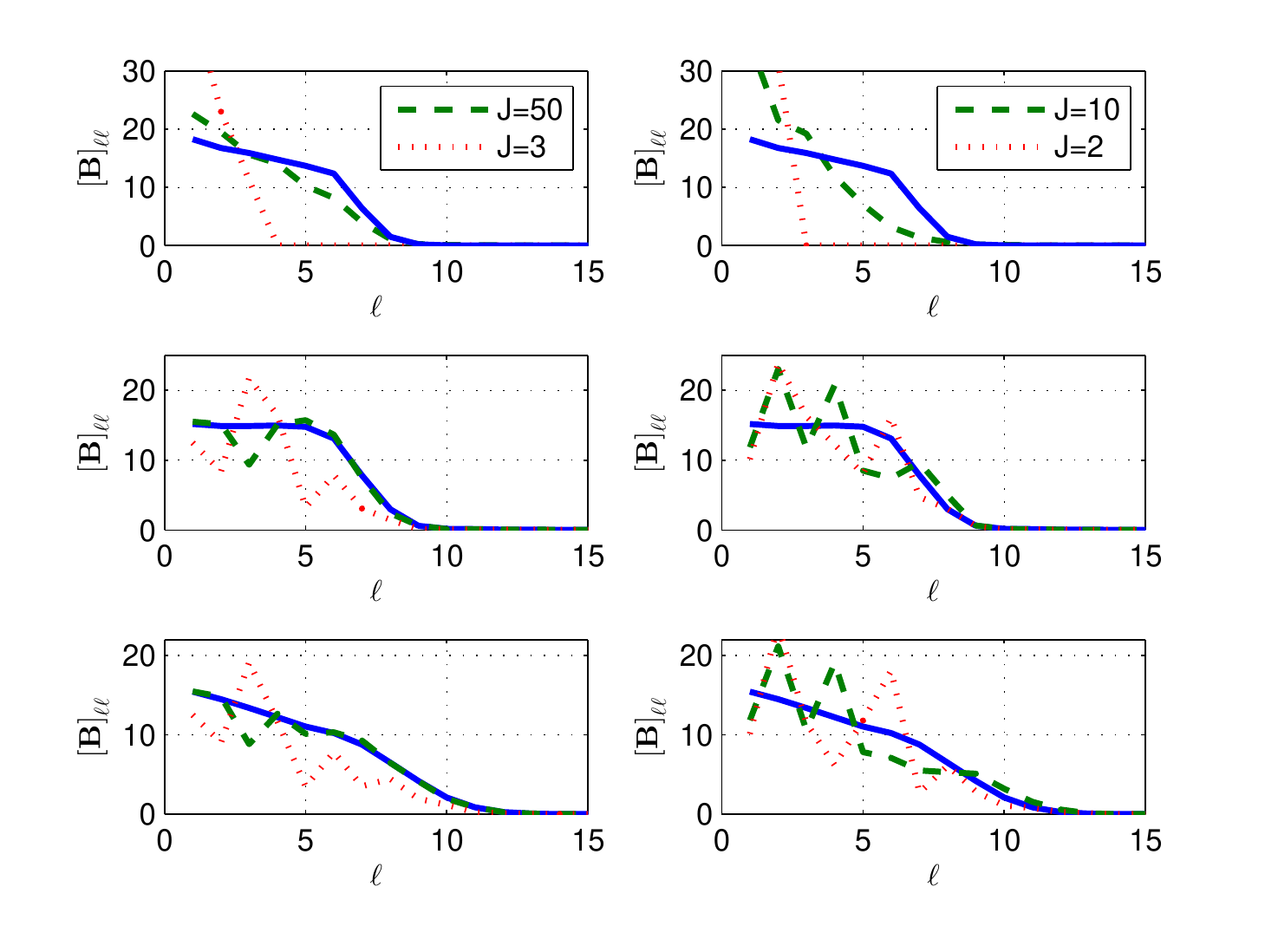}
        \caption{Estimated channel spectra at $\SNR=10$ dB; $\AS=7.2^\circ$, and $\phi=60^\circ$ (known).
        Estimated LPM spectra in the left column of sub-figures are obtained with $50$ or $3$ samples, whereas $10$ or $2$ in the right. The sub-figures in the top, middle, and bottom rows are the spectra with respect to the KLT, DCT, and polynomial bases. The blue solid curves correspond to the results of the true $\mb\Phi$.}
 \label{fig:AS72_sub_50}
\end{figure}

\begin{figure}
    \centering
        \includegraphics[width=3.5in]{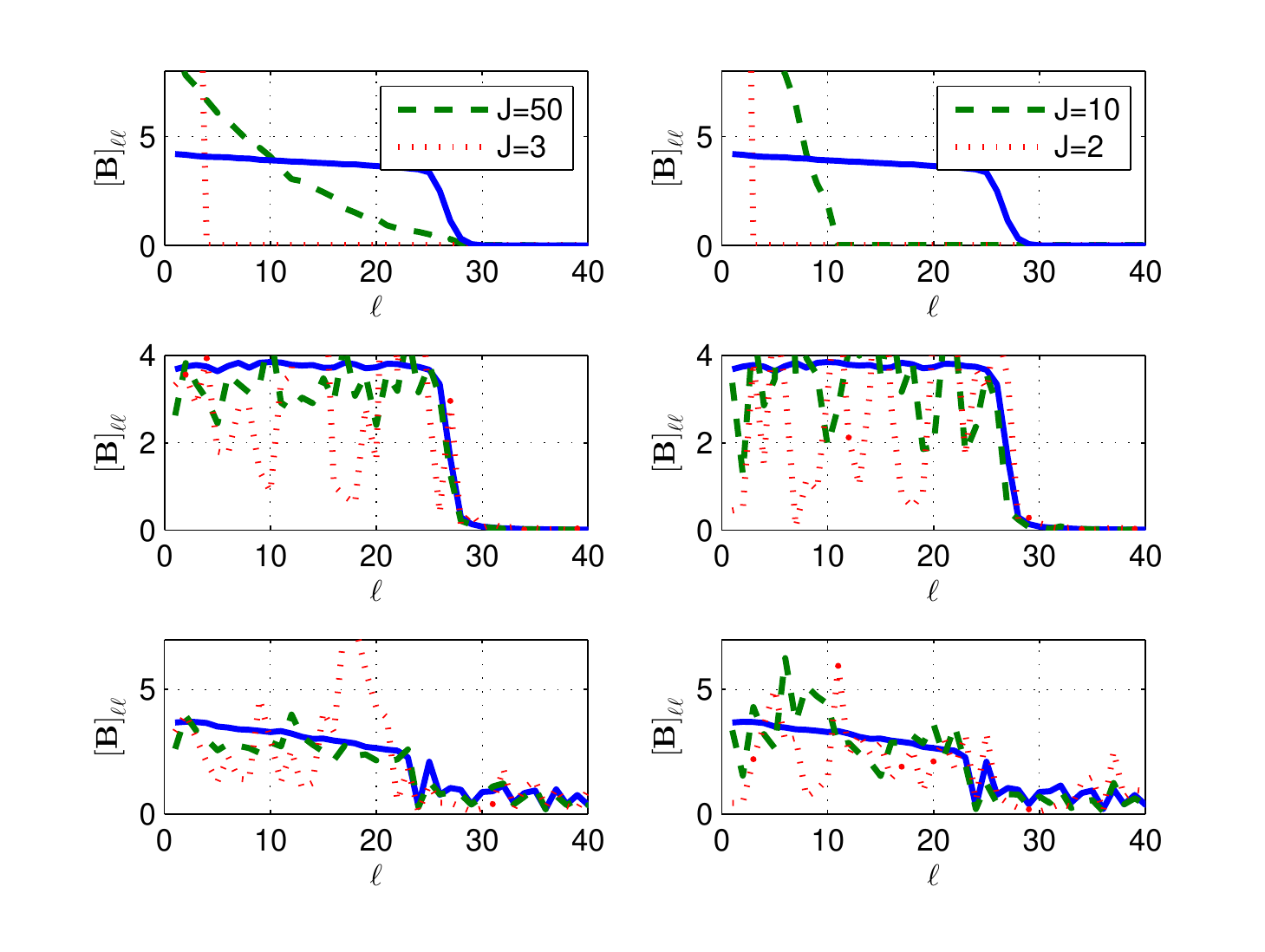}
        \caption{Estimated and true LPM channel spectra at $\SNR=10$ dB; $\AS=15^\circ$, and $\phi=0^\circ$ (known).
        Estimated spectra in the left sub-figures are obtained with $50$ or $3$ samples, whereas $10$ or
        $2$ samples are used elsewhere. The sub-figures in the top, middle, and bottom rows are the spectra with respect to
        the KLT, DCT, and polynomial bases. The blue solid curves are associated with the true $\mb\Phi$.}
 \label{fig:AS15_sub_50}
\end{figure}
\subsection{Spatial Correlation, Receive Beamforming, and Multi-cluster Channels}
Also shown in Table \ref{tab:IMOD_72} and \ref{tab:IMOD_15} are  the average ${\hat{m}_\eta}$'s obtained
via the IMOD algorithm, which usually converges in just a few (no more than five) iterations. A more
realistic scenario when $\mb{\Phi}$ is estimated by substituting $\mb\Psi$ in (\ref{eqn:est_SCM})
with (\ref{eq:Psi}) using $J=10$ periods samples is considered there as well. Figs. \ref{fig:AS72_sub_50}
and \ref{fig:AS15_sub_50} show the corresponding estimated channel spectra. As $\hat{\bm{\Psi}}$ is
obtained by averaging $J$ outer product matrices and has a rank less than or equal to $J$, we cannot
find a proper estimate of ${m_\eta}$ for KLT basis when the true dominant rank exceeds the number of sample periods.
This is a shortcoming of the KLT approach when the associated correlation matrix has to be estimated by
averaging small time-domain samples. This sample-deficient problem exists for other similar rank
determination methods \cite{MOD_1,MOD_2} but is of much less concern for the predetermined basis approach
we have adopted.

\begin{figure}
    \centering
        \includegraphics[width=3.5in]{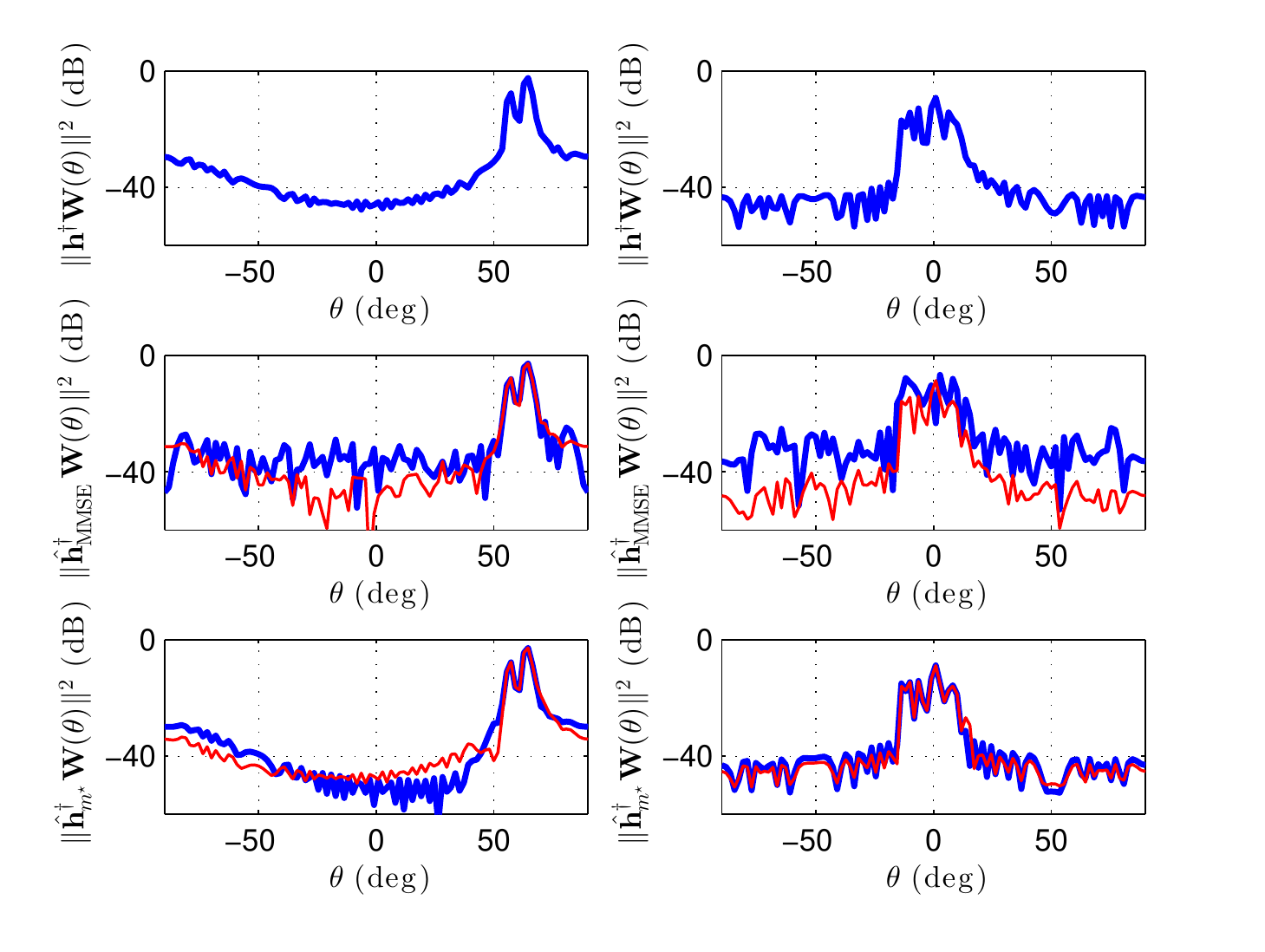}
        \caption{Beam patterns implied by the proposed RR channel estimator for
        two known mean AoAs at $\SNR=0$ dB. The left and right sub-figures assume
        $(\AS,\phi)=(7.2^\circ,60^\circ)$ and $(15^\circ,0^\circ)$ and those in the
        top, middle, and bottom rows are generated by the true channel vector, MMSE
        \cite{MMSE_CE}, and RR estimate (\ref{h_hat}), respectively. Red and blue
        curves represents respectively the results with perfect $\mb\Phi$
        or its estimated counterpart using $J=10$ samples.}
 \label{fig:Ch_BP_AS72_AS15}
\end{figure}

We now verify the effectiveness of our RR channel estimator from the receiver's viewpoint
which is similar to but slightly different from what was discussed in {\it Remark \ref{rem:sp_filter}}.
We notice that a maximum-ratio combining (MRC) receiver would first multiply a received vector ${\mb y}$ by
$\hat{\mb h}^\dag$, the pseudoinverse of the channel estimate $\hat{\mb h}$, before making
hard or soft symbol decisions. This operation is equivalent to receive beamforming. A receive
beamformer which acts as a spatial filter should conceivably match the incoming signal beam
pattern: it should point toward the mean AoA with a (main) beamwidth approximately equal to
the 2-sided AS so that the dominant part of the signal carried by the channel vector is filtered
with minimum distortion while noise or interference from other directions are suppressed. This
prediction is confirmed by Fig. \ref{fig:Ch_BP_AS72_AS15}; see (\ref{eq:RR_CE_Yp}).

Although the MMSE channel estimator \cite{MMSE_CE}
\[
\hat{\mb h}=\mb\Phi\left(\|\mb p\|^2\mb\Phi+\mb I_M\right)^{-1}\left(\mb p^T\otimes\mb I_M\right)
\mathrm{vec}(\mb Y)
\]
yields performance similar to the RR estimators when ${\bf \Phi}$ is perfectly known, it is sensitive
to the correlation matrix estimation error. In contrast, our estimator is more robust as it does not need complete
and accurate information about ${\bf \Phi}$, all it needs is the dominant rank of the channel spectrum
and the associated dominant support.

Finally, to demonstrate that our channel estimator is applicable to multi-cluster channels,
we show two examples in Fig. \ref{fig:2path_DFT_Bdiag_I}.
In the upper sub-figure, the two clusters are separable both spatially, i.e., $[{\phi}_1-\Delta_1,
{\phi}_1+\Delta_1]=[52.8^\circ,67.2^\circ]$ and $[{\phi}_2-\Delta_2,{\phi}_2+\Delta_2]=
[0^\circ,30^\circ]$, and in the DFT domain. As a result, the dominant support consists of two
disjoint sets, $\mathcal{F}_m=[1,{m_\eta}_1]\cup[{m_\eta}_2,{m_\eta}_3]$, so the multi-cluster channel can be estimated via (\ref{C_only}) with $\mb Q_m\defeq
[{\mb q}_\ell]_{\ell\in\mathcal{F}_m}$.
In the lower sub-figure we consider a channel with two overlapped clusters which are inseparable.
By treating this channel as a single-cluster channel, we still can apply our scheme with
dominant support $\mathcal{F}_m=[{m_\eta}_4,{m_\eta}_5]$ to obtain a CE.
\begin{figure}
    \centering
        \includegraphics[width=3.76in]{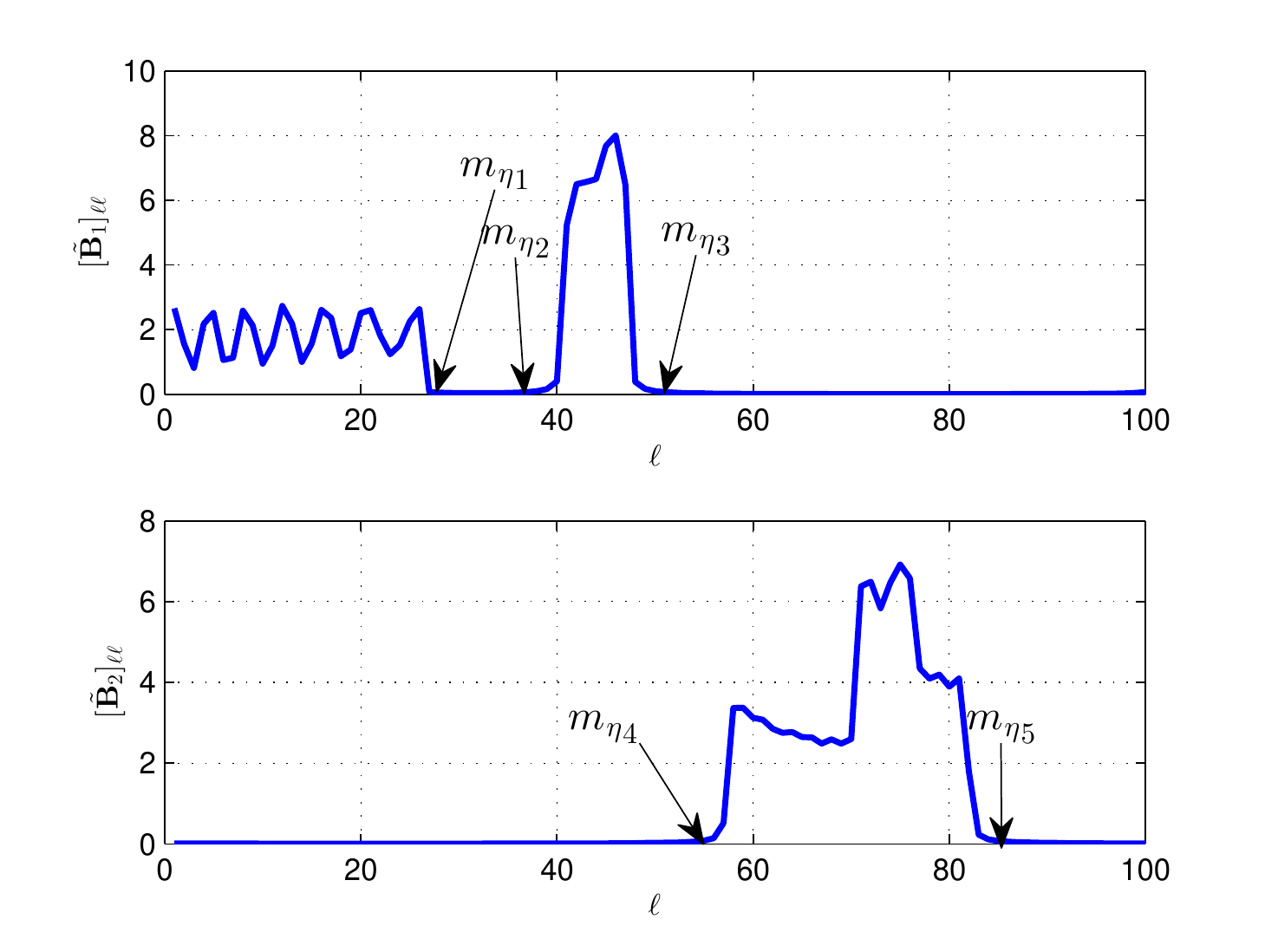}
        \caption{Channel energy spectra of two $2$-cluster channels with respect to the DFT basis.
        (Upper part) $(\Delta_1,{\phi}_1)=(7.2^\circ,60^\circ)$ and $(\Delta_2,{\phi}_2)=(15^\circ,15^\circ)$;
        (Bottom) $(\Delta_1,{\phi}_1)=(7.2^\circ,-30^\circ)$ and $(\Delta_2,{\phi}_2)=(15^\circ,-45^\circ)$.}
 \label{fig:2path_DFT_Bdiag_I}
\end{figure}
\section{Conclusion}
\label{section:conclusion}
In this paper, we extend a model-based RR channel estimator for use in spatially-correlated
massive MIMO systems. The two proposed estimators' rank-reduction nature and looser spatial
correlation information requirement result in performance improvement over conventional LS
and MMSE estimators and lower complexity in post-channel-estimation signal processing. One
of the estimator--the LPM-aided one--obtain a mean AoA estimate as a by-product. The LPM
operation is shown to offer another benefit of further rank reduction when the DCT basis is
employed, resulting in near-optimal energy compaction and performance.

We analyze the impacts of and interrelations among the critical design issues such as basis
selection and rank determination and other system/channel parameters' (e.g., SNR, spatial
correlation, mean AoA and AS) on the estimators' MSE performance. The analytic relation between
the channel's energy spectrum and the estimator's MSE enable us to develop efficient rank
and/or mean AoA estimation methods. Viewing the channel estimation problem from different
perspectives helps casting new insights into the problem of concern. Although we focus our
discussion on single-cluster channels, we briefly demonstrate the feasibility of extending
our approach to more general multi-cluster channels.

\appendices
\renewcommand{\thesection}{Appendix \Alph{section}}
\renewcommand{\theequation}{\Alph{section}.\arabic{equation}}
\renewcommand{\thelemma}{\Alph{section}.\arabic{lemma}}
\setcounter{equation}{0}
\setcounter{lemma}{0}
\section{Proof of Lemma \ref{thm:MSE}}
\label{app:pf_MSE}
\setcounter{equation}{0}
\setcounter{lemma}{0}
In the following, we derive the variance and bias terms of the MSE of $\hat{\mb{h}}_m$ (\ref{RRApp}).
Those of $\hat{\mb{h}}'_m$ for the regular model (\ref{C_only}) can be similarly obtained. We start with
\begin{IEEEeqnarray}{rCl}
\var\{\hat{\mb{h}}_m\}
&=&\mathbb{E}\left\{\left\|\hat{\mb{h}}_m-\mathbb{E}
\{\hat{\mb{h}}\}\right\|^2\right\}~~~~\nonumber\\
&=&\frac{1}{\gamma^2}\mb{p}^H\mathbb{E}\left\{
\mb{N}^H\mb{W}(\hat{\phi})
\mb{Q}_{m}\mb{Q}_m^H
\mb{W}^H(\hat{\phi})\mb{N}\right\}\mb{p}\nonumber\\
&=&\frac{1}{\gamma^2}\mb{p}^H\mathrm{tr}\left(\mb{W}(\hat{\phi})
\mb{Q}_{m}\mb{Q}_m^H
\mb{W}^H(\hat{\phi})\right)\mb{I}_T\mb{p}\notag\\
&{=}&\frac{1}{\gamma^2}\mb{p}^H(m)\mb{p}
=\frac{m}{\beta\|\mb{p}\|^2}
\end{IEEEeqnarray}
where we have invoked the relation
\begin{IEEEeqnarray}{rCl}
\mathbb{E}\left\{\mb{N}^H\mb{X}\mb{N}\right\}
&=&\sum_{i=1}^{M}\sum_{j=1}^{M}x_{ij}\mathbb{E}\left\{\mb{n}_i\mb{n}_j^H
\right\} \nonumber\\
&=&\sum_{i=1}^{M}x_{ii}\mathbb{E}\left\{\mb{n}_i\mb{n}_i^H\right\}
=\mathrm{tr}(\mb{X})\mb{I}_T~~~
\end{IEEEeqnarray}
with white noise $\mb{N}=\left[\mb{n}_1,\cdots,\mb{n}_M\right]^H$
for any square matrix $\mb{X}=[x_{ij}]$. For the bias term, we have
\begin{IEEEeqnarray}{rCl}
b(\hat{\mb{h}}_m)&=&\mathbb{E}\left\{\left
\|\mathbb{E}\{\hat{\mb{h}}_m\}-\mb{h}
\right\|^2\right\} \nonumber\\
&=&\mathbb{E}\left\{\mb{h}^H\left(\mb{W}(\hat{\phi})
\mb{Q}_{m}\mb{Q}_m^H
\mb{W}^H(\hat{\phi})-\mb{I}_M\right)^{\hspace{-.2em}2}
\hspace{-.15em}\mb{h}\right\}\nonumber\\
&=&\tr\left(\left(\mb{W}(\hat{\phi})
\mb{Q}_{m}\mb{Q}_m^H
\mb{W}^H(\hat{\phi})-\mb{I}_M\right)^{\hspace{-.2em}2}
\hspace{-.15em}\mb{\Phi}\right).\vspace{-1em}\nonumber\\
&&\hspace{1.7em}\underbrace{\hspace{13.2em}}_{\defeq\mb{A}_1}\notag
\end{IEEEeqnarray}
Since
\begin{IEEEeqnarray}{rCl}
\mb{Q}_{m}
\mb{Q}_{m}^{H}
&=&
\mb{Q}(\mb{I}_M-\mb{D}_m)
\mb{Q}^{H}
\end{IEEEeqnarray}
where $\mb{D}_m$ 
is
an idempotent matrix, i.e., $\mb{D}_m^2=\mb{D}_m$,
we have $\mb{A}_1=\mb{W}(\hat{\phi})\mb{Q}
\mb{D}_m\mb{Q}^{H}\mb{W}^H(\hat{\phi})$
and
\begin{IEEEeqnarray}{rCl}
b(\hat{\mb{h}}_m)
&=&\tr\left(\mb{W}(\hat{\phi})\mb{Q}\mb{D}_m
\mb{Q}^{H}
\mb{W}^H(\hat{\phi})\mb{\Phi}\right)\label{eqn:46}\nonumber\\
&=&\tr\left(\mb{D}_m\mb{Q}^{H}
\mb{W}^H(\hat{\phi})\mb{\Phi}\mb{W}(\hat{\phi})\mb{Q}\right)\nonumber\\
&=&\tr\left(\mb{D}_m\hat{\mb{B}}\right)=\sum_{\mathcal{F}^c_m}[\hat{\mb{B}}]_{\ell\ell}=\langle \mb{D}_m, \hat{\mb{B}}\rangle_F,~~~~
\end{IEEEeqnarray}
the Frobenius product of $\mb{D}_m$ and $\hat{\mb{B}}$.


\begin{thebibliography}{25}
\bibitem{Scaling} F. Rusek, D. Persson, B. K. Lau, E. G. Larsson,
T. L. Marzetta, O. Edfors, and F. Tufvesson, ``Scaling up MIMO:
opportunities and challenges with very large arrays," \textit{IEEE
Signal Proces. Mag.}, vol. 30, no. 1, pp. 40--60, Jan. 2013.

\bibitem{Measurement2} S. Payami and F. Tufvesson, ``Channel measurements
and analysis for very large array systems at 2.6 GHz," in {\it Proc. EuCAP 2012},
Prague, Czech, Mar. 2012.


\bibitem{Opt_pilot} M. Biguesh and A. B. Gershman, ``Training-based
MIMO channel estimation: a study of estimator tradeoffs and
optimal training signals,"  \textit{IEEE Trans. Signal Process.},
vol. 54, no. 3, pp. 884--893, Mar. 2006.

\bibitem{MMSE_CE} H. Yin, D. Gesbert, M. Filippou, and Y. Liu, ``A coordinated approach to channel
estimation in large-scale multiple-antenna systems," {\it IEEE J. Sel. Areas Commun.}, vol. 31, no. 2,
pp. 264--273, Feb. 2013.

\bibitem{CE_liter1} D. C. Ara\`{u}jo, A. L. F. de Almeida, J. Axn\"{a}s, and J. C. M. Mota, ``Channel
estimation for millimeter-wave very-large MIMO systems," in {\it Proc. EUSIPCO 2014}, Lisbon, Portugal,
Sep. 2014.

\bibitem{CE_liter3} C.-K. Wen, S. Jin, K.-K. Wong, J.-C. Chen, and P. Ting, ``Channel estimation for
massive MIMO using Gaussian-mixture Bayesian learning," {\it IEEE Trans. Wireless Commun.}, vol. 14,
no. 3, pp. 1356--1368, Mar. 2015.


\bibitem{Model_based} Y.-C. Chen and Y. T. Su, ``MIMO channel estimation in correlated
fading environments," \textit{IEEE Trans. Commun.}, vol. 9, no. 3, pp. 1108--1119, Mar. 2010.

\bibitem{GCommwork} K.-F. Chen, Y.-C. Liu, and Y. T. Su, ``On composite channel estimation
in wireless massive MIMO systems," in {\it Proc. IEEE GLOBECOM Workshops 2013}, pp. 135--139,
Atlanta, GA, Dec. 2013.

\bibitem{CE_liter2} Q. Zhang, X. Zhu, Y. Yang, P. Shang, and J. Liu, ``Efficient transform-domain
channel estimation technique for large-scale multiple antenna systems," in {\it Proc. IEEE PIMRC 2014},
pp. 448--452, Washington, D.C., Sep. 2014.

\bibitem{CE_liter5} P. Zhang, L. Gan, S. Sun, and C. Ling, ``Atomic norm denoising-based channel estimation
for massive multiuser MIMO systems," in {\it Proc. IEEE ICC 2015}, pp. 4564--4569, London, UK, Jun. 2015.


\bibitem{NewMeasurement} V. Kristem, S. Sangodoyin, C. U. Bas,
M. K\"{a}ske, J. Lee, C. Schneider, G. Sommerkorn, J. Zhang, R. Thom\"{a}, and A. F. Molisch, ``3D MIMO outdoor
to indoor macro/micro-cellular channel measurements and modeling," in {\it Proc. GLOBECOM 2015}, San Diego, CA,
Dec. 2015.

\bibitem{Kron_Model} D.-S. Shiu, G. J. Foschini, M. J. Gans, and J.
M. Kahn, ``Fading correlation and its effect on the capacity of
multielementantenna systems," \textit{IEEE Trans. Commun.}, vol.
48, no. 3, pp. 502--513, Mar. 2000.
%
\bibitem{Sayeed} A. M. Sayeed, ``Deconstructing multiantenna fading
channels," {\it IEEE Trans. Signal Process.}, vol. 50, no. 10,
pp. 2563--2579, Oct. 2002.
%
\bibitem{Ozcelik}W. Weichselberger, M. Herdin, H. \"{O}zcelik, and
E. Bonek, ``A stochastic MIMO channel model with joint correlation
of both link ends," {\it IEEE Trans. Wireless Commun.}, vol. 5,
no. 1, pp. 90--100, Jan. 2006.

\bibitem{SCM_v} C.-X. Wang, X. Hong, H. Wu and W. Xu,
``Spatial-temporal correlation properties of the 3GPP spatial
channel model and the Kronecker MIMO channel model,"
{\it EURASIP J. Wireless Commun. and Netw.}, 2007.

\bibitem{KLT} K. R. Rao and P. C. Yip, \textit{The Transform and
Data Compression Handbook}, CRC Press, Inc., 
2000.

\bibitem{KLT2} P. R. Haddad and A. N. Akansu, \textit{Multiresolution
Signal Decomposition, Second Edition: Transforms, Subbands, and
Wavelets}, Academic Press, 2000. 

\bibitem{DCT} N. Ahmed, T. Natarajan, and K. R. Rao, ``Discrete cosine transform,"
\textit{IEEE Trans. Comput.}, vol. C-23, no. 1, pp. 90--93, Jan. 1974.

\bibitem{Oppenheim} A. V. Oppenheim and R. W. Schafer,
\textit{Discrete-Time Signal Processing: Third Edition}, Pearson, 2010.

\bibitem{MOD_1} C. J. Zarowski, ``The MDL criterion for rank
determination via effective singular values," \textit{IEEE Trans.
Signal Process.}, vol. 46, no. 6, pp. 1741--1744, Jun 1998.

\bibitem{MOD_2} J. J. Blanz, ``Method and apparatus for reduced
rank channel estimation in a communications system," U.S. Patent
6,907,270, issued June 14, 2005.

\bibitem{TBW} J. A. Tsai, R. M. Buehrer, and B. D. Woerner, ``The impact
of AOA energy distribution on the spatial fading correlation of linear
antenna array," in {\it Proc. IEEE VTC 2002}, vol. 2, pp. 933--937, Birmingham,
AL, May 2002.

\bibitem{JSDM} A. Adhikary, J. Nam, J.-Y. Ahn, and G. Caire, ``Joint spatial division
and multiplexing--the large-scale array regime," {\it IEEE Trans. Inf. Theory}, vol. 59,
no. 10, Oct. 2013.

\bibitem{SpCorr_AS&mAoA}  J. Fuhl, A. F. Molisch, and E. Bonek, ``Unified channel model for mobile radio systems with smart antennas," Proc. IEE--Radar, Sonar Navig., vol. 145, no. 1, pp. 32--41, Feb. 1998.

\bibitem{SCM} ``Spatial channel model for multiple input multiple output (MIMO) simulations,"
3GPP TR 25.996 V11.0.0, Sep. 2012. [Online]. Available: http://www.3gpp.org/ftp/Specs/html-info/25996.htm.

%
%
%
%
%
%
%

\end{thebibliography}
\end{document}